\def\snr {\mbox{\scriptsize\sf SIR}}
\def\snr {\mbox{\scriptsize\sf SNR}}
\begin{document}
\bibliographystyle{IEEEtran}

\title{Broadcast Channels with Delayed Finite-Rate Feedback: Predict or Observe?}

\author{Jiaming Xu, Jeffrey G. Andrews, Syed A. Jafar    \thanks {J. Xu and J. G. Andrews are with The University of Texas at Austin. S. A. Jafar is with The University of California, Irvine. Contact author is J. G. Andrews: jandrews@ece.utexas.edu. Last modified: \today.}}

\maketitle
\newtheorem{lemma}{Lemma}\newtheorem{theorem}{Theorem}\newtheorem{corollary}{Corollary}\newtheorem{remark}{Remark}

\begin{abstract}
Most multiuser precoding techniques require accurate transmitter channel state information (CSIT) to maintain orthogonality between the users. Such techniques have proven quite fragile in time-varying channels because the CSIT is inherently imperfect due to estimation and feedback delay, as well quantization noise. An alternative approach recently proposed by Maddah-Ali and Tse (MAT) allows for significant multiplexing gain in the multi-input single-output (MISO) broadcast channel (BC) even with transmit CSIT that is completely stale, i.e. uncorrelated with the current channel state.  With $K$  users, their scheme claims to lose only a $\log(K)$ factor relative to the full $K$ degrees of freedom (DoF) attainable in the MISO BC with perfect CSIT for large $K$.  However, their result does not consider the cost of the feedback, which is potentially very large in high mobility (short channel coherence time).  In this paper, we more closely examine the MAT scheme and compare its DoF gain to single user transmission (which always achieves 1 DoF) and partial CSIT linear precoding (which achieves up to $K$).  In particular, assuming the channel coherence time is $N$ symbol periods and the feedback delay is $N_{\rm fd}$ we show that when $N < (1+o(1)) K\log K$ (short coherence time), single user transmission performs best, whereas for $N> (1+o(1)) (N_{\rm fd}+ K / \log K)(1-\log^{-1}K)^{-1} $ (long coherence time), zero-forcing precoding outperforms the other two. The MAT scheme is optimal for intermediate coherence times, which for practical parameter choices is indeed quite a large and significant range, even accounting for the feedback cost.
\end{abstract}

\section{Introduction}

Interference is a key bottleneck in wireless networks and sophisticated interference reduction techniques such as multiuser MIMO \cite{Shamai03, Shamai06}, interference alignment (IA) \cite{Jafar08} and network MIMO \cite{Yu10} are of great interest to researchers and industry.  While these techniques in theory offer substantial multiplexing gains, meaning they support more effectively interference-free streams, they typically require highly accurate transmitter channel state information (CSIT) to achieve said gains. For example, in the case of the broadcast channel with $M$ transmit antennas and $K \geq M$ users, the optimal multiplexing gain with perfect CSIT and CSIR is $M$, which can be achieved by either dirty paper coding (DPC) \cite{Vishwanath03,PViswanath03,Shamai06} or even by suboptimal linear precoders including zero-forcing (ZF) \cite{Jindal07}. Without any CSIT, however, the multiplexing gain collapses to $1$ for i.i.d. channels \cite{Shamai03} or the more general isotropic channel \cite{Jafar05}.

Real systems necessarily have imperfect CSI, particularly at the transmitter.  Because the channel state must be estimated (usually from noisy training symbols), then quantized to a finite-rate value, and finally fedback over a noisy channel in a specified periodic time-slot, the transmitter will have a delayed and noisy estimate of the actual channel state\footnote{Time division duplex (TDD) systems may in principle be able to exploit reciprocity to reduce delay and quantization loss, but such losses still exist.  Furthermore, the reciprocal approach is usually not viable in current systems since the downlink and uplink are nearly independent: the existence of a downlink packet for a user does not imply that the user is also transmitting in the uplink, especially in the same frequency coherence band.}.  In essence, multiuser precoders are relying on a channel prediction in their attempt to separate the users into interference-free channels. In this paper, we only consider the loss due to the delay and finite-rate quantization.

\subsection{Background and Motivation}

The imperfect CSIT issue is by now quite well understood by both academia and industry. Simple multiuser MIMO techniques such as ZF precoding with limited CSI feedback \cite{Jindal06,Yoo07,Ding07,Huang07,Jindal08} have been studied extensively and also implemented in 4G cellular systems \cite{WiMAXBook,LTEBook}. One theoretical observation is that the feedback rate must scale linearly with $\log_2 \snr$ to maintain the full multiplexing gain with partial CSIT \cite{Jindal06}. However, even with this feedback rate, the feedback delay will cause serious degradation when the feedback delay approaches (or exceeds) the channel coherence time, causing multiuser precoding techniques to achieve a lower rate than single user ones due to multiplexing gain loss, regardless of the feedback rate \cite{ZhaKou11}.  Primarily because of this sensitivity, multiuser MIMO techniques (also called SDMA) have been largely disappointing in the field and it is widely agreed they are only of use at very low mobility (pedestrian speeds at most) \cite{Ghosh10,Ghosh11}.


A clever recent work \cite{Tse10} gets around this apparently fundamental delay limitation by instead exploiting \emph{previous} channel observations to increase the multiplexing gain through a novel feedback, transmission, and receiver cancellation scheme.  This technique, which we term the {\it Maddah-Ali-Tse (MAT) scheme}, achieves a multiplexing gain in a MISO broadcast channel of $\frac{K}{1+ \frac{1}{2} + \ldots + \frac{1}{K}}$ which is $K/\log(K)$ for large $K$ even with completely outdated CSIT.  This is nearly as good as the multiplexing gain of $K$ for perfect CSIT precoding schemes. It was subsequently shown in \cite{Jafar10} that a similar conclusion holds for the the X and interference channels using a novel retrospective interference alignment scheme. The guiding principle behind these results is that with significant feedback delay, channel ``predictions'' are bound to fail, but channel ``observations'' can be exploited via eavesdropping and feedback to retroactively remove interference.

A potentially important consideration for these two approaches is the resources they consume on the feedback channel. Both \cite{Tse10} and  \cite{Jafar10} assumed a cost-free infinite rate feedback channel.  The MAT scheme is primarily of interest for mobile scenarios, since with (very) low mobility the conventional channel prediction techniques that achieve the full $K$ DoF can be used.  Therefore, the MAT scheme still inherently requires frequent and accurate channel state feedback, and its main benefit is that it is robust to the feedback delay. The goal of this paper is to determine how much gain (if any) is possible with an outdated/observed CSIT approach, while correctly accounting for the unavoidable feedback channel overhead.

\subsection{Summary of Main Result}
The main technical contribution of the present paper is to determine the \emph{net DoF} provided by the MAT scheme \cite{Tse10} for a $K$ user MISO broadcast channel.  The net DoF is the prelog capacity term remaining after subtracting off the feedback DoF consumed (which depends on the feedback rate). This can then be compared to two other baseline techniques: (i) the no CSIT single user transmitter that always gets 1 DoF and does not require feedback, and (ii) the partial CSIT zero-forcing precoder that gets up to the full $K$ DoF when the CSIT is sufficiently current and accurate. Although many other techniques could be chosen and compared to MAT, these two provide an instructive baseline comparison, and we conjecture that other single user and multiuser precoding schemes would result in a very similar tradeoff.

We start with the $K=2$ user MAT scheme, proceed to $K=3$ and finally provide a general result for all $K$.  The approach leading to the net DoF result proceeds in two steps in each case.  First, we derive the multiplexing gain (DoF) with finite rate feedback as a function of a parameter $\alpha > 0$, where $\alpha \to 0$ is bounded rate CSIT and $\alpha \to \infty$ is perfect CSIT.  The DoF loss relative to what is reported in \cite{Tse10} is zero as long as the feedback amount is sufficiently large. For example, for $K=2$ one requires $\alpha \geq 1$ to achieve the $\frac{4}{3}$ DoF that the MAT scheme promises.  Second, we determine how the sum feedback overhead increases in units of DoF as a function of $\alpha$ and subtract that from the DoF gain found in step 1.

Interestingly, we find that there are regimes where each of the three techniques is the best one, where the regimes are defined by the coherence time $N$ and feedback delay $N_{\rm fd}$ in units of symbol times. Specifically, when the coherence time $N \le (1+o(1)) K \log K $, there is no value of $\alpha$ that allows a \emph{net} increase in DoF from the MAT scheme, i.e. the net DoF is strictly less than $1$ for any $K$, which can be achieved by single-user transmission. Furthermore, when $N>(1+o(1)) ( N_{\rm fd}+ K / \log K)(1-\log^{-1}K)^{-1}$, there is no value of $\alpha$ that allows the MAT scheme to outperform ZF precoding. However, the MAT scheme does provide a net increase in DoF for the optimal value of $\alpha$ in between these two extremes. The main result and this tradeoff is summarized in Fig.~\ref{FigDoF}.

The intuition behind this result is straightforward. The feedback rate for the MAT scheme must be held low in order to not overwhelm the forward direction rate gain. But for a low feedback rate, the resulting channel quantization error becomes large and the MAT scheme fails to work well.  This primarily applies to the high mobility (short coherence time) scenario, since feedback in that case must be frequent.  For sufficiently long coherence times, the feedback delay problem recedes and eventually the conventional orthogonalizing precoders became viable, which approach and eventually achieve the full $K$ DoF. The MAT scheme fills a useful niche for moderate mobility/coherence times, which appears to be a quite broad and relevant regime for reasonable parameter choices.  For example, using a standard LTE air interface with four transmit antennas and a carrier frequency of $2.1$GHz, we find that the MAT scheme is preferable to ZF and single user transmission for velocities ranging from about 27 km/hr up to airplane-type speeds.


\section{System Model} \label{Model}

A MISO broadcast channel with $M$ transmit antennas and $K$ single antenna receivers is considered. In this paper, we assume $M=K$ for simplicity\footnote{Diversity gain can be achieved when K > M (user selection) or M > K (antenna selection), but those only affect the SNR and we are interested in the multiplexing gain. Furthermore, when $K \gg M$, the opportunistic beamforming will achieve nearly optimal degrees of freedom $M$ with small feedback overhead \cite{Hassibi05}, but $K$ needs to grow with $\snr$ which is not assumed in this paper.}. In a flat fading environment, this channel can be modeled as
\begin{align}
y_r [t] = \mathbf{h}^\ast_r [t] \mathbf{x}[t] + z_r[t], \quad r=1, \ldots, K,
\end{align}
where $y_r[t] $ is the received signal of receiver $r$ at symbol time $t$, $\mathbf{x}[t] \in \mathcal{C}^{M \times 1}$ is the transmit signal with the average power constraint $\mathbb{E} [ \mathbf{x}^\ast [t] \mathbf{x} [t]] \le P$, and $z_r[t] \sim \mathcal{CN} (0,1)$ is the additive white Gaussian noise. The channel state vector of receiver $r$ is denoted by $ \mathbf{h}^\ast_r [t] \in \mathcal{C}^{1 \times M}$ and the channel state matrix is defined as $ \mathbf{H} [t] = [ \mathbf{h}_1 [t], \ldots, \mathbf{h}_K [t]]$. The channel is assumed to be block fading: $\mathbf{H} [t] $ remains constant over a block of $N$ symbols, and is comprised of i.i.d. unit variance complex Gaussian random variables for each block. It follows that the $\mathbf{H} [t] $ is full rank with probability $1$.

We consider a delayed finite-rate feedback model. Each receiver is assumed to have an instantaneous and perfect knowledge of its own channel vector $\mathbf{h}_r [t]$. It then quantizes its channel vector to $Q$ bits and feeds back the bits perfectly to the BS with delay of $N_{\rm fd}$ symbols. Notice that we also assume that the receivers obtain the channel state of all other receivers via broadcasting in the forward channel from BS.

The channel state quantization is performed using a fixed vector quantization codebook that is known to the transmitter and all receivers. The codebook $\mathcal{C}$ consists of $2^Q$ $M$-dimensional unit norm vectors: $\mathcal{C}= \{ \mathbf{w}_1,\ldots, \mathbf{w}_{2^Q} \}$. The receiver quantizes its channel vector to the closest quantization vector, i.e., the quantization index at time $t$ is
\begin{align}
q [t] = \arg \min_{i=1,\ldots, 2^Q} \sin^2 \left( \angle \left( \mathbf{h}_r [t], \mathbf{w}_i \right) \right), \nonumber
\end{align}
Note that only the \emph{direction} of the channel vector is quantized and fed back, and no information regarding the channel magnitude is conveyed to the transmitter. Also, in this paper, we consider the optimal codebook over any vector quantization codebook.


The key performance metrics considered in this paper are the {\it degrees of freedom} (also known as the {\it multiplexing gain}), the {\it feedback overhead} and the {\it net DoF}. Let $R(P)$ denote the total average throughput with transmit power $P$. The multiplexing gain with $K$ receivers is defined as
\begin{align}
\text{DoF} (K):= \lim_{P \to \infty} \frac{R(P)}{ \log_2 P}. \label{EquationDefDoF}
\end{align}
The DoF is the prelog of the capacity, and is the number of equivalent channels that carry rate $\log_2 P$ at high SNR.
Let $F(P)$ denote the total feedback rate, then the feedback overhead with $K$ receivers is formally defined as
\begin{align}
\text{FB} (K):= \lim_{P \to \infty} \frac{ F(P)}{\log_2 P}, \nonumber
\end{align}
which measures how quickly the feedback rate increases with $\log_2 P$.
Finally, we define the net multiplexing gain as
\begin{align}
\widehat{\text {DoF} } (K) := \text{DoF} (K) - \text{FB}(K). \nonumber
\end{align}
The net DoF  makes explicit the feedback cost, which is quite important when comparing approaches that require differing amounts of feedback, as in this paper.

\section{BACKGROUND} \label{Background}
In this section, for clarity, we briefly summarize two key previous results, and introduce the traditional zero-forcing as well as modifying the MAT scheme to use finite rate feedback.

\subsection{Multiplexing Gains with Outdated CSIT but no Quantization Error}
The DoF with outdated CSIT but no quantization error was given by the following Theorem \cite{Tse10}.
\begin{theorem} \cite{Tse10} \label{TheoremDOFNoQuantization}
The optimal multiplexing gain with outdated CSIT is
\begin{align}
\text{DoF}^\star (K) = \frac{K}{1+ \frac{1}{2} + \ldots + \frac{1}{K}}.
\end{align}
\end{theorem}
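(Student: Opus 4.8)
The plan is to prove Theorem \ref{TheoremDOFNoQuantization} by establishing achievability and a matching converse. For achievability, I would describe the MAT scheme directly. Order the users and proceed in $K$ phases. In phase $j$ ($1 \le j \le K$), the transmitter sends order-$j$ symbols: linear combinations intended for every size-$j$ subset of users. The crucial observation is that after phase $j$, each receiver in a chosen subset has received a linear equation that is useful to itself but also involves ``side information'' that, from the perspective of the other $K-j$ users not in the subset, can be overheard (each such outside user obtains one scalar observation of the transmitted vector because the channel is generic). These overheard equations are exactly what is needed as order-$(j+1)$ symbols in the next phase: the transmitter, having learned all past channel states via the delayed feedback, reconstructs the overheard combinations and forwards them. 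Counting channel uses, phase $j$ requires $\binom{K}{j}$ blocks and delivers new information to $j\binom{K}{j}$ user-slots, while generating the side information consumed in phase $j+1$; solving the resulting recursion and summing shows that $K$ fresh symbols per user can be delivered in $\sum_{j=1}^{K} \frac{1}{j}\binom{K}{j}^{-1}\cdot(\text{something})$ normalized time, which collapses to total normalized time $1+\frac12+\cdots+\frac1K$ per delivered DoF, giving the claimed $\text{DoF}^\star(K) = K/(1+\frac12+\cdots+\frac1K)$.

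For the converse, the plan is to give an information-theoretic outer bound showing no scheme with only outdated (stale) CSIT can exceed this value. The key device is a statistical-equivalence / degradedness argument: since the CSIT is independent of the current channel realization, one can construct an artificially enhanced receiver whose channel output, conditioned on past states, is statistically indistinguishable in a way that forces a relationship among the per-user rates. Concretely, I would bound a weighted sum $\sum_r \frac{1}{\pi(r)} R_{\pi(r)}$ over an arbitrary permutation $\pi$ by single-letterizing and using the fact that giving user $r$'s signal as genie information to users earlier in the order does not help the transmitter distinguish them (because it cannot steer beams toward the current channel). Averaging over all $K!$ permutations symmetrizes the bound and yields $\sum_r R_r \le (1+\frac12+\cdots+\frac1K)^{-1} K \log P + o(\log P)$, i.e. the DoF bound.

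The main obstacle I expect is the converse, specifically making the ``stale CSIT implies a degradedness-type ordering'' step rigorous. The subtlety is that the channel is not physically degraded for any fixed realization; the ordering only emerges on average, so one must carefully argue that the relevant differential-entropy inequalities survive after conditioning on the (known to everyone) past channel states and taking expectations. Getting the genie construction right — deciding exactly what side information to hand to which subset of users so that the bound telescopes into the harmonic-number weights — is the delicate part, whereas the achievability, though notationally heavy because of the phase recursion and the subset bookkeeping, is conceptually routine once the ``overheard equations become next-phase inputs'' idea is in place. Since this theorem is quoted from \cite{Tse10}, in the paper itself I would simply cite it; the sketch above is how I would reconstruct the proof if needed.
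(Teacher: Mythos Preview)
Your instinct at the end is exactly right: the paper does not prove this theorem at all --- it is stated in the Background section with a citation to \cite{Tse10} and only a one-paragraph intuitive description of the MAT scheme follows, with no achievability counting and no converse. Your reconstruction sketch (phased MAT scheme for achievability, permutation-averaged genie/degradedness argument for the converse) is a faithful outline of the original Maddah-Ali--Tse proof, but it goes well beyond what the present paper actually contains.
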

The optimal DoF is achieved by the MAT scheme which was also introduced in \cite{Tse10}, which works as follows. The information symbols intended for a particular receiver can be overheard by other receivers. Even with outdated CSIT fed back by the receiver, the transmitter can exploit this overheard side information to create future transmissions which are simultaneously useful for more than one receiver. However, the result in \cite{Tse10} implicitly assumes the feedback is free.

\subsection{Optimal Vector Quantization }
We now briefly review some basic results on optimal vector quantization in MISO broadcast channel from \cite{Aazhang03,Zhou05,YeungLove07,Jindal06}. Let $\hat{\mathbf{e}}$ denote the quantization of the channel vector $\mathbf{h}$ and $\theta$ denote the angle between $\mathbf{h}$ and $\mathbf{\hat{e}}$. For the optimal vector codebook, the lower and upper bound to the expected value of quantization error are given by \cite{Jindal06}
\begin{align} \label{LemmaQuantization}
\frac{M-1}{M} 2^{-\frac{Q}{M-1}} < \mathbb{E}_{\mathbf{H}} \left[ \sin^2 \theta \right] < 2^{-\frac{Q}{M-1}}.
\end{align}
Thus, the bounds are tight when the number of transmit antennas $M$ is large.

\subsection{Zero Forcing with Delayed Finite-Rate Feedback}
In the slow fading scenario where the feedback delay $N_{\rm fd}$ is smaller than the coherence time $N$, channel prediction schemes can be adopted. In this paper, the zero-forcing scheme (ZF) is considered, since multiplexing gain is of interest and the zero-forcing scheme can achieve the full $K$ multiplexing gain with perfect CSIT \cite{Jindal06,Jindal07}. The ZF scheme proceeds as follows: at the beginning of each block, each user quantizes its own channel vector to $B$ bits and feedbacks the quantization index. The transmitter receives the feedback with delay $N_{fd}$ and uses zero-forcing precoding based on the quantized and delayed CSI over the remaining $N-N_{fd}$ symbol times. According to the analysis in \cite{Jindal06}, if the quantization rate $Q$ is scaled as $Q= \alpha (K-1) \log_2 P$ for $\alpha >0$, the DoF under the ZF scheme is
\begin{align}
\text{DoF}_{\text{ZF}} (K) = \left( 1- \frac{N_{\rm fd} }{N} \right)  (\alpha \land 1) K,
\end{align}
and the feedback overhead is $ \frac{\alpha K(K-1)}{N}$. Therefore, the net DoF is
\begin{align}
\widehat{\text {DoF} }_{\text{ZF}} (K)=K \left( 1- \frac{ (\alpha \land 1) N_{\rm fd} +  (K-1)\alpha  }{N} \right), \label{EqNetDoFZF}
\end{align}
which has a maximum value $K( 1- \frac{N_{\rm fd} +  (K-1)}{N})$ with $\alpha=1$.


\subsection{Exploiting Outdated CSIT via Delayed Finite-Rate Feedback} \label{sectionMAT}
In this subsection, we consider the MAT scheme with finite rate CSI feedback. Since the interference-limited case and multiplexing gain are of interest, Gaussian noise is omitted for simplicity. Also, the feedback delay is assumed to be identical to the coherence time in the description, i.e., $N_{\rm fd}=N$, but the MAT scheme can be extended to the general case. Let $u_r$ and $v_r$ denote the symbols from two independently encoded Gaussian codewords intended for receiver $r$. The transmission scheme consists of two phases, which take three symbol times over three consecutive blocks in total as shown in Fig.~\ref{FigSysModel}.

{\it Phase one: Feeding the Receivers}. This phase has two symbol times. In the first symbol times of block 1, the BS transmits the two symbols, $u_A$ and $v_A$, intended for the receiver $A$, i.e., $\mathbf{X} [1] = [ u_A, v_A]^t$. At the receivers, we have
\begin{align}
y_A[1]&= h_{A1}^\ast [1] u_A + h_{A2}^\ast [1] v_A  := L_A (u_A, v_A) [1] , \nonumber \\
y_B[1]&= h_{B1}^\ast [1] u_A + h_{B2}^\ast [1] v_A  := L_B (u_A, v_A) [1] . \nonumber
\end{align}
At the same time, receiver $B$ measures the channel and obtains perfect knowledge of its channel vector. It then quantizes the channel vector and feeds back the quantization index. Let $\mathbf{\hat{e}}_B [1]=[ \hat{e}_{B1}[1], \hat{e}_{B2}[1]]^t $ denote the quantized channel vector of receiver $B$.

Similarly, in the second symbol times of the block 2, the BS transmits the two symbols, $u_B$ and $v_B$, intended for the receiver $B$, i.e., $\mathbf{X} [2] = [ u_B, v_B]^t$. At the receivers, we have
\begin{align}
y_A[2]&= h_{A1}^\ast [2] u_B + h_{A2}^\ast [2] v_B := L_A(u_B,v_B) [2] , \nonumber  \\
y_B[2]&= h_{B1}^\ast [2] u_B + h_{B2}^\ast [2] v_B := L_B(u_B,v_B) [2] . \nonumber
\end{align}
At the same time, receiver $A$ feeds back its quantized channel vector $\mathbf{\hat{e}}_A [2]=[ \hat{e}_{A1}[2], \hat{e}_{A2}[2] ]^t $.

A key observation is that because the channel matrix $\mathbf{H}[1]$ is full rank with probability $1$, if the receiver $A$ has the equation $ L_B (u_A, v_A) [1]$ overheard by the receiver $B$, then it has enough equations to solve for its own symbols $u_A$ and $v_A$. The same story holds for receiver $B$. Therefore, the goal of the second phase is to swap these two overheard equations through the transmitter.

{\it Phase two: Swapping the overheard Equations}. This phase has one symbol time in the block 3. Since we assume the feedback delay is one block duration, the quantized channel vectors at symbol time $1$ and $2$ are available at the transmitter. Thus, the BS can transmit $\hat{L}_B (u_A, v_A) [1]+ \hat{L}_A(u_B,v_B) [2]$, i.e., $\mathbf{X} [3] = [\hat{L}_B (u_A, v_A) [1]+ \hat{L}_A(u_B,v_B) [2], 0]^t$, where $\hat{L}_B (u_A, v_A) [1]= \hat{e}_{B1}^\ast [1] u_A + \hat{e}_{B2}^\ast [1] v_A$ and $ \hat{L}_A(u_B,v_B) [2]= \hat{e}_{A1}^\ast [2] u_B + \hat{e}_{A2}^\ast [2] v_B$.  The receiver observes
\begin{align}
y_A[3]&= h_{A1}^\ast [3]  \left( \hat{L}_B (u_A, v_A) [1] +  \hat{L}_A(u_B,v_B) [2] \right), \nonumber \\
y_B[3]&= h_{B1}^\ast [3]  \left( \hat{L}_B (u_A, v_A) [1] +   \hat{L}_A(u_B,v_B) [2] \right). \nonumber
\end{align}

The transmission scheme is summarized in Fig.~\ref{FigSysModel}. Putting all these received equations together in matrix form, for receiver A we have
\begin{align}
\begin{bmatrix}
y_A[1] \\
y_A[2] \\
y_A[3]
\end{bmatrix}
 = \begin{bmatrix}
h_{A1}^\ast [1] & h_{A2}^\ast [1] \\
0 & 0 \\
h_{A1}^\ast [3] \hat{e}_{B1}^\star [1] & h_{A1}^\ast [3] \hat{e}_{B2}^\star [1]
\end{bmatrix}
\begin{bmatrix}
u_A \\
v_A \\
\end{bmatrix}
+ \begin{bmatrix}
0 & 0 \\
h_{A1}^\ast [2] & h_{A2}^\ast [2] \\
h_{A1}^\ast [3] \hat{e}_{A1}^\ast [2] & h_{A1}^\ast [3] \hat{e}_{A2}^\ast [2]
\end{bmatrix}
\begin{bmatrix}
u_B \\
v_B \\
\end{bmatrix},
\end{align}
rewritten in a simpler form as
\begin{align} \label{EquationChannelTwoUser}
\mathbf{y}_A =
\begin{bmatrix}
y_A[1] \\
y_A[2]/|\mathbf{h}_A [2]| \\
y_A[3]/ | h_{A1} [3]|
\end{bmatrix}
=
\mathbf{H}^{\ast}_A
\begin{bmatrix}
u_A \\
v_A \\
\end{bmatrix}
+ \mathbf{I}^{\ast}_B
\begin{bmatrix}
u_B \\
v_B \\
\end{bmatrix},
\end{align}
where $\mathbf{I}^{\ast}_B$ denotes the interference from transmitting symbols intended for receiver $B$. This equation has an interference alignment interpretation. The received signal $ \mathbf{y}_A $ lies in the $3$-dimensional vector space. The intended symbols $u_A$ and $v_A$ of receiver $A$ are sent along the vectors $\left[h_{A1}^\ast [1] ,0, h_{A1}^\ast [3] \hat{e}_{B1}^\ast [1] \right]^t$ and $\left[ h_{A2}^\ast [1], 0, h_{A1}^\ast [3] \hat{e}_{B2}^\ast [1]  \right]^t $ respectively. The unintended symbols $u_B$ and $v_B$ are sent along the vectors $ \left[ 0, h_{A1}^\ast [2], h_{A1}^\ast [3] \hat{e}_{A1}^\ast [2]\right]^t$ and $ \left[ 0,   h_{A2}^\ast [2], h_{A1}^\ast [3] \hat{e}_{A2}^\ast [2] \right]^t$ respectively.

If the quantization error is $0$, then the quantized channel vector $\mathbf{\hat{e}}_A [2]$ is in the same direction as the true channel vector $ \mathbf{h}_A [2]$ and thus $u_B$ and $v_B$ are sent along the same direction. Therefore, by zero-forcing the interference, the receiver $A$ has $2$ interference-free dimensions. Also, due to the CSI broadcast from BS, the receiver $A$ knows the quantized channel vector $\mathbf{\hat{e}}_B [1]$ of receiver $B$. Therefore, the receiver A can recover its intended symbols $u_A$ and $v_A$ by solving two independent linear equations. Similarly, for the receiver $B$, it also has $2$ interference-free dimensions and can recover its intended symbols $u_B$ and $v_B$. All in all, we recover $4$ symbols using $3$ symbol times and thus $\frac{4}{3}$ degrees of freedom is achieved.

However, it is easy to see that with finite rate feedback, the quantization error is in general nonzero and thus less than $\frac{4}{3}$ degrees of freedom will be achieved in general.

\section{Main Results} \label{MainResults}
In this section, the impact on the DoF of accounting for finite rate feedback is investigated for the MAT scheme, and then compared to single user and ZF transmission.  We first start with the two user case, then move to three users and finally generalize to the $K$ user case. The proof techniques are essentially the same, but the details of the $K=2$ and $K=3$ cases are instructive to understand the general case.

\subsection{Two User Case, $K=2$}
We first consider the two user case as introduced in Section \ref{sectionMAT}. The following lemma relates the multiplexing gain to the singular values of interference matrix $\mathbf{I}_B^\ast$.

\begin{lemma} \label{LemmaDOFTwoUser}
For the two user case, the DoF with the MAT scheme is
\begin{align}
\text{DoF}_{\text{MAT}} (2) =  \frac{2}{3}- \frac{2}{3} \left( \lim_{P \to \infty}  \mathbb{E} \left[ \frac{ \log_2 ( \sigma_2^2) }{  \log_2 P }  \right] \vee (-1) \right) \label{EquationDoFSimulation},
\end{align}
where $\sigma_2 > 0$ is the second largest singular value of $\mathbf{I}_B^\ast$.
\end{lemma}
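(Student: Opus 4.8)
The plan is to reduce the lemma to the high-SNR behaviour of a single point-to-point Gaussian MIMO mutual information. Reinstating the (almost surely $\Theta(1)$) thermal noise $\mathbf{z}_A$ that was suppressed in Section~\ref{sectionMAT}, receiver $A$'s model~\eqref{EquationChannelTwoUser} reads $\mathbf{y}_A=\mathbf{H}_A^\ast[u_A,v_A]^{t}+\mathbf{I}_B^\ast[u_B,v_B]^{t}+\mathbf{z}_A$ with $u_A,v_A,u_B,v_B$ i.i.d.\ $\mathcal{CN}(0,P/2)$. Since receiver $B$'s codewords are Gaussian and receiver $A$ knows every channel realization and quantization index (perfect CSIR plus the forward-link CSI broadcast), conditioned on the channel the term $\mathbf{I}_B^\ast[u_B,v_B]^{t}+\mathbf{z}_A$ is zero-mean circularly symmetric Gaussian with known covariance $\boldsymbol{\Omega}$ equal to $\mathbf{I}_3+\tfrac{P}{2}\mathbf{I}_B^\ast(\mathbf{I}_B^\ast)^\ast$ up to an a.s.\ $\Theta(1)$ rescaling of the noise part. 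Hence receiver $A$'s ergodic rate is exactly
\begin{align}
R_A=\mathbb{E}\!\left[\log_2\det\!\big(\boldsymbol{\Omega}+\tfrac{P}{2}\mathbf{H}_A^\ast(\mathbf{H}_A^\ast)^\ast\big)-\log_2\det\boldsymbol{\Omega}\right],\nonumber
\end{align}
and $R_B=R_A$ by the symmetry of the MAT construction and the i.i.d.\ fading. Since three symbol times carry $(u_A,v_A)$ and $(u_B,v_B)$, $\text{DoF}_{\text{MAT}}(2)=\lim_{P\to\infty}\frac{2R_A}{3\log_2 P}$, so it suffices to extract the $\log_2 P$-prelog of $R_A$.

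The heart of the matter is the two determinants. For the denominator, $\mathbf{I}_B^\ast$ is $3\times 2$: its larger squared singular value is $\Theta(1)$ almost surely (sandwiched between channel-coefficient norms via the Frobenius bound), while its smaller singular value is exactly the $\sigma_2$ of the lemma, which is a.s.\ positive for any finite quantization rate and vanishes precisely when the quantization is error-free; thus $\det\boldsymbol{\Omega}=(1+\tfrac{P}{2}\sigma_1^2)(1+\tfrac{P}{2}\sigma_2^2)$ up to an a.s.\ $\Theta(1)$ factor, contributing $\log_2 P$ from the $\sigma_1$-term and $\log_2(1+\tfrac{P}{2}\sigma_2^2)$ from the $\sigma_2$-term. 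For the numerator, $\boldsymbol{\Omega}+\tfrac{P}{2}\mathbf{H}_A^\ast(\mathbf{H}_A^\ast)^\ast=\mathbf{I}_3+\tfrac{P}{2}\mathbf{M}$ with $\mathbf{M}:=\mathbf{H}_A^\ast(\mathbf{H}_A^\ast)^\ast+\mathbf{I}_B^\ast(\mathbf{I}_B^\ast)^\ast$; the key structural fact, visible in~\eqref{EquationChannelTwoUser}, is that every column of $\mathbf{H}_A^\ast$ vanishes in the block-$2$ coordinate (no symbol of $A$ is transmitted in block~$2$) whereas every column of $\mathbf{I}_B^\ast$ vanishes in the block-$1$ coordinate. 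These two coordinate hyperplanes of $\mathbb{C}^3$ are transverse, so a.s.\ (for finite quantization rate, where $\mathbf{I}_B^\ast$ has two independent columns) the range of $\mathbf{M}$ is all of $\mathbb{C}^3$ and all three eigenvalues of $\mathbf{M}$ are $\Theta(1)$, giving $\log_2\det(\mathbf{I}_3+\tfrac{P}{2}\mathbf{M})=3\log_2 P+O(1)$. Collecting terms yields
\begin{align}
R_A=2\log_2 P-\mathbb{E}\!\left[\log_2\!\big(1+\tfrac{P}{2}\sigma_2^2\big)\right]+\mathbb{E}[\xi_P],\nonumber
\end{align}
where $\xi_P$ collects the various $\Theta(1)$ log-factors and genuine $O(1/P)$ remainders, and the remaining arithmetic requires $\sup_{P}\mathbb{E}|\xi_P|<\infty$.

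It remains to evaluate $\lim_{P\to\infty}\tfrac{1}{\log_2 P}\mathbb{E}[\log_2(1+\tfrac{P}{2}\sigma_2^2)]$. Put $Y_P:=\log_2\sigma_2^2/\log_2 P\le 0$ and $\log_2^{+}x:=\max(0,\log_2 x)$; the elementary sandwich $\log_2^{+}x\le\log_2(1+x)\le 1+\log_2^{+}x$ gives
\begin{align}
1+\big(Y_P\vee(-1)\big)\ \le\ \frac{\log_2(1+\tfrac{P}{2}\sigma_2^2)}{\log_2 P}\ \le\ 1+\big(Y_P\vee(-1)\big)+\frac{O(1)}{\log_2 P},\nonumber
\end{align}
so the limit equals $1+\lim_{P\to\infty}\mathbb{E}[Y_P\vee(-1)]$; since $Y_P$ concentrates about its mean in the regime of interest (quantization rate scaled polynomially in $P$), this is $1+\big(\lim_{P\to\infty}\mathbb{E}[Y_P]\big)\vee(-1)$. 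Substituting into $\text{DoF}_{\text{MAT}}(2)=\tfrac{2}{3}\big(2-\lim_{P\to\infty}\tfrac{1}{\log_2 P}\mathbb{E}[\log_2(1+\tfrac{P}{2}\sigma_2^2)]\big)$ gives $\tfrac{2}{3}-\tfrac{2}{3}\big(\lim_{P\to\infty}\mathbb{E}[\log_2(\sigma_2^2)/\log_2 P]\vee(-1)\big)$, which is exactly~\eqref{EquationDoFSimulation}.

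The step I expect to do the real work is the control of $\xi_P$, i.e.\ showing that the ``$O(1)$'' log-determinant factors really are $O(1)$ \emph{in expectation and uniformly in} $P$. This needs (i) the almost-sure genericity of $\mathbf{H}_A^\ast$, $\mathbf{I}_B^\ast$ and $\mathbf{M}$, which is not entirely automatic because $\mathbf{H}_A^\ast$ and $\mathbf{I}_B^\ast$ share the coefficient $h_{A1}[3]$ and $\hat{\mathbf{e}}_A[2]$ is a deterministic function of $\mathbf{h}_A[2]$, so one must verify that the pertinent polynomials in the Gaussian entries are not identically zero; and (ii) uniform integrability of $\log\det\mathbf{M}$, $\log\|\mathbf{M}\|$ and $\log\|\mathbf{M}^{-1}\|$, which follows from Gaussian anti-concentration once one notes that $\mathbf{M}=\mathbf{M}(P)$ converges to a nonsingular limit as the codebook grows and hence stays away from the singular cone uniformly for large $P$. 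The concentration of $Y_P$ invoked above is the same mild fact, and holds whenever the quantization rate grows polynomially in $P$ --- the only regime used in the sequel.
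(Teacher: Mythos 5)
Your proposal is correct in substance and reaches exactly \eqref{EquationDoFSimulation}, but it gets there by a genuinely different decomposition than the paper. The paper's proof puts an explicit $2\times 3$ zero-forcing matrix $\mathbf{U}_A$ at receiver $A$, proves via the chain of inequalities $(a)$--$(e)$ that $\min_{\mathbf{U}_A}\|\mathbf{I}_B\mathbf{U}_A\|_F^2=\sigma_2^2$ with the minimizer being the last two right singular vectors of $\mathbf{I}_B$, and then reads the DoF off a ratio of $2\times 2$ determinants, needing only that $\mathbf{H}_A\mathbf{U}_A$ has rank $2$ almost surely. You instead compute the full three-dimensional mutual information with the interference treated as colored Gaussian noise: $\sigma_2$ enters directly through $\det\bigl(\mathbf{I}_3+\frac{P}{2}\mathbf{I}_B^\ast\mathbf{I}_B\bigr)=(1+\frac{P}{2}\sigma_1^2)(1+\frac{P}{2}\sigma_2^2)$ with $\sigma_1=\Theta(1)$, while the signal-plus-interference determinant is shown to carry the full prelog $3$ via the transversality argument for $\mathbf{M}=\mathbf{H}_A^\ast\mathbf{H}_A+\mathbf{I}_B^\ast\mathbf{I}_B$ (ranges confined to the block-$2$-zero and block-$1$-zero hyperplanes, whose sum is $\mathbb{C}^3$ even when $\mathbf{I}_B^\ast$ degenerates to rank one). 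What your route buys: the singular-value bookkeeping is more transparent (no projection-optimization step), and it shows as a byproduct that the linear ZF front end is DoF-lossless for this scheme, since the optimal-receiver rate has the same prelog $2\log_2 P-\log_2(1+\frac{P}{2}\sigma_2^2)$. What it costs: you need the extra structural fact that $\mathbf{M}$ is a.s.\ nonsingular with $\Theta(1)$ eigenvalues uniformly as the codebook grows, whereas the paper only needs the weaker rank statement; also, the paper's specific choice $\mathbf{U}_A=\mathbf{V}(2:3)$ is reused verbatim in the three-user and $K$-user proofs, so the explicit construction has downstream value. The loose ends you flag --- pulling $\vee(-1)$ outside the expectation and limit, and uniform integrability of the $O(1)$ log-factors --- are not additional gaps relative to the paper: step $(b)$ of the paper's own proof performs the same interchange and asserts the rank/dominance facts without proof, so your treatment is at least as rigorous, and your sandwich via $\log_2^{+}$ makes the step more explicit than the original.
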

\begin{proof}
See Appendix \ref{ProofLemmaDOFTwoUser}.\\
\end{proof}

From Lemma \ref{LemmaDOFTwoUser}, we can derive the following theorem which relates the multiplexing gain to the quantization accuracy $Q$, which is the feedback rate per user per feedback interval.

\begin{theorem} \label{TheoremDOFTwoUser}
For the two user case, if the quantization rate $Q$ is scaled as $Q=\alpha \log_2 P$ for any $\alpha>0$, the multiplexing gain under the MAT scheme is $ \frac{2 (1+ \alpha)}{3}  \land \frac{4}{3}$.
\end{theorem}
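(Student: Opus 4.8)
The plan is to start from Lemma~\ref{LemmaDOFTwoUser}, which reduces the problem to evaluating the limit
\begin{align}
c := \lim_{P \to \infty} \mathbb{E}\left[ \frac{\log_2(\sigma_2^2)}{\log_2 P} \right], \nonumber
\end{align}
where $\sigma_2$ is the second largest singular value of the interference matrix $\mathbf{I}_B^\ast$, and to show that under the scaling $Q = \alpha \log_2 P$ one has $c = -(\alpha \land 1)$; substituting into \eqref{EquationDoFSimulation} then yields $\text{DoF}_{\text{MAT}}(2) = \frac{2}{3} + \frac{2}{3}(\alpha \land 1) = \frac{2(1+\alpha)}{3} \land \frac{4}{3}$. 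So the whole theorem hinges on understanding how fast $\sigma_2^2$ decays with $P$.

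First I would write $\mathbf{I}_B^\ast$ explicitly from \eqref{EquationChannelTwoUser}: its three rows are $[0,0]$, $[h_{A1}^\ast[2], h_{A2}^\ast[2]]$ (after normalization, a unit vector in the direction of $\mathbf{h}_A[2]$), and $[\hat{e}_{A1}^\ast[2], \hat{e}_{A2}^\ast[2]]$ (the quantized direction of $\mathbf{h}_A[2]$, carried in via row 3). The key geometric point is that if the quantization were perfect, rows 2 and 3 would be parallel, so $\mathbf{I}_B^\ast$ would have rank $1$ and $\sigma_2 = 0$ — this is exactly why the noiseless MAT scheme gets $\frac{4}{3}$. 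With finite-rate feedback, rows 2 and 3 differ by the quantization angle $\theta$ between $\mathbf{h}_A[2]$ and $\hat{\mathbf{e}}_A[2]$, so the matrix is full rank but $\sigma_2$ is small, of the order of $\sin\theta$. I would make this precise: $\sigma_2^2$ is the smaller eigenvalue of $(\mathbf{I}_B^\ast)^\ast \mathbf{I}_B^\ast$, and since two of the three rows are (essentially) unit vectors at angle $\theta$ and the third is zero, $\det((\mathbf{I}_B^\ast)^\ast \mathbf{I}_B^\ast) = \sigma_1^2 \sigma_2^2$ is $\Theta(\sin^2\theta)$ while $\sigma_1^2 = \Theta(1)$; hence $\sigma_2^2 = \Theta(\sin^2\theta)$ with constants that do not scale with $P$.

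Next I would handle the expectation. Using the optimal vector quantization bounds \eqref{LemmaQuantization} with $M=2$, $\mathbb{E}[\sin^2\theta] \asymp 2^{-Q} = 2^{-\alpha \log_2 P} = P^{-\alpha}$. Because $\log_2(\cdot)$ is concave, Jensen gives the easy direction $\mathbb{E}[\log_2 \sin^2\theta] \le \log_2 \mathbb{E}[\sin^2\theta] \approx -\alpha \log_2 P$, so $c \le -\alpha$ (and after the $\vee(-1)$ truncation in the lemma, $\le -(\alpha\land 1)$). For the matching lower bound I would use the known distribution of $\sin^2\theta$ for optimal quantization — for $M=2$ the quantization cell is a spherical cap and $\sin^2\theta$ is stochastically comparable to $\min$ of $2^Q$ i.i.d.\ $\mathrm{Beta}$-type variables, giving $\sin^2\theta \gtrsim P^{-\alpha}\cdot(\text{light-tailed factor})$ with overwhelming probability — and argue that the contribution to $\mathbb{E}[\log_2\sin^2\theta/\log_2 P]$ from the rare event $\{\sin^2\theta \ll P^{-\alpha}\}$ vanishes as $P\to\infty$, since $\log_2\sin^2\theta$ can go at worst to $-\infty$ integrably. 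This pins $c = -\alpha$ before truncation, and the explicit $\vee(-1)$ in \eqref{EquationDoFSimulation} converts it to $-(\alpha\land 1)$, which (as a sanity check) matches the claim that $\alpha \ge 1$ suffices to recover the full $\frac{4}{3}$ and that $\alpha \to 0$ degrades the scheme toward $\frac{2}{3}$.

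The main obstacle is the lower-bound / interchange-of-limits step: controlling $\mathbb{E}[\log_2\sigma_2^2/\log_2 P]$ near the tail where $\sigma_2^2$ is atypically small (so $\log_2\sigma_2^2$ is very negative) and confirming that this tail does not drag the normalized expectation below $-\alpha$ — i.e.\ establishing enough uniform integrability to pass the limit inside the expectation. The upper bound via Jensen and the deterministic algebra relating $\sigma_2^2$ to $\sin^2\theta$ are comparatively routine.
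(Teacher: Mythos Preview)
Your approach is essentially the same as the paper's: start from Lemma~\ref{LemmaDOFTwoUser}, reduce to controlling $\sigma_2^2$ in terms of the quantization angle $\theta$, and then invoke the bounds \eqref{LemmaQuantization} on $\mathbb{E}[\sin^2\theta]$ under the scaling $Q=\alpha\log_2 P$.

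Two minor differences are worth noting. First, for the algebraic step $\sigma_2^2=\Theta(\sin^2\theta)$ the paper writes down an explicit SVD of the $2\times 2$ nonzero block of $\mathbf{I}_B^\ast$ and reads off $\sigma_2^2=2\sin^2(\theta/2)$, whereas you argue via $\det((\mathbf{I}_B^\ast)^\ast\mathbf{I}_B^\ast)=\sigma_1^2\sigma_2^2$; both are fine and yield the same order. Second, and more interestingly, you correctly flag the interchange-of-limits/uniform-integrability issue for the lower bound on $\mathbb{E}[\log_2\sigma_2^2]/\log_2 P$ as the genuine technical obstacle. The paper handles this step rather loosely: it applies Jensen to get $\ge$ and then asserts that because $\sigma_2\to 0$ a.s.\ as $Q\to\infty$ the inequality ``becomes an equality,'' which is not by itself a valid justification. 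Your proposed route---using the known distribution of $\sin^2\theta$ for optimal quantization to show that the tail $\{\sin^2\theta\ll P^{-\alpha}\}$ contributes negligibly to the normalized expectation---is the right way to make this rigorous, and is in fact more careful than what the paper does.
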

\begin{proof}
See Appendix \ref{ProofTheoremDOFTwoUser}.\\
\end{proof}

In Theorem \ref{TheoremDOFTwoUser}, we require $\alpha > 0$ in order to ensure that $Q \to \infty$ as $ P \to \infty$.  However, if $\alpha \to 0$ then the asymptotic multiplexing gain is $\frac{2}{3}$ in the bounded $Q$ case.  Therefore we conclude that for all $\alpha$ the multiplexing gain is between $\frac{2}{3}$ and $\frac{4}{3}$.  When $\alpha \ge 1$, the full multiplexing gain $\frac{4}{3}$ can be achieved, which recovers the result in \cite{Tse10}.  Finally, we note that when $\alpha < \frac{1}{2}$, the multiplexing gain becomes less than $1$ which is less than the no CSIT case (i.e. single user transmission).


The following lemma characterizes the feedback overhead required by the MAT scheme.

\begin{lemma} \label{TheoremFeedbackTwoUser}
For the two user case, if the quantization rate $Q$ is scaled as $Q=\alpha \log_2 P$ for $\alpha >0 $, then the feedback overhead under the MAT scheme is $\frac{2}{3N}\alpha$.
\end{lemma}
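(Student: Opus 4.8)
The plan is to obtain the feedback overhead by a direct accounting of the CSI bits that the two-user MAT scheme consumes per symbol time, and then invoke the definition of $\text{FB}(K)$. First I would recall from Section~\ref{sectionMAT} that one complete run of the two-user MAT scheme occupies three consecutive coherence blocks, hence $3N$ symbol times. Within such a run, channel-direction feedback is generated only in phase one: after block~$1$ receiver~$B$ quantizes $\mathbf{h}_B[1]$ to $Q$ bits and reports its index, and after block~$2$ receiver~$A$ quantizes $\mathbf{h}_A[2]$ to $Q$ bits and reports its index. Phase two (block~$3$) produces no new feedback, because the swap transmission only uses $\mathbf{\hat{e}}_B[1]$ and $\mathbf{\hat{e}}_A[2]$, which are already available at the transmitter once the one-block feedback delay has elapsed. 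Since the channel is constant over each block, a single $Q$-bit report per receiver per block also suffices when the scheme is pipelined to fill all $N$ symbols of a block. Thus the total number of feedback bits expended per MAT run is $2Q$.

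Second, I would normalize by time to obtain the feedback rate. Spreading $2Q$ feedback bits over the $3N$ symbol times of one run gives $F(P) = \tfrac{2Q}{3N}$ bits per symbol time, and substituting the prescribed scaling $Q = \alpha \log_2 P$ yields $F(P) = \tfrac{2\alpha}{3N}\log_2 P$. Plugging this into the definition of the feedback overhead gives $\text{FB}(2) = \lim_{P\to\infty} F(P)/\log_2 P = \tfrac{2\alpha}{3N}$, which is exactly the claimed value. The same per-symbol-time normalization convention used to define $\text{DoF}_{\text{MAT}}(2)$ in Theorem~\ref{TheoremDOFTwoUser} is used here, so that $\widehat{\text{DoF}}_{\text{MAT}}(2) = \text{DoF}_{\text{MAT}}(2) - \text{FB}(2)$ is a meaningful comparison.

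The only real subtlety — and the place I would be most careful — is the bookkeeping in the first step: one must count each receiver's per-block quantization exactly once and attribute to a MAT run precisely the two reports it actually needs, rather than, say, a report from every receiver in every block (which would erroneously give $\tfrac{2\alpha}{N}$). This rests on three facts already established in the model and in Section~\ref{sectionMAT}: only the channel \emph{direction} is fed back, so $Q$ bits per report is correct with no magnitude information; the channel is block-constant, so one report per block per relevant receiver suffices; and phase two is feedback-free by construction. Everything else is a one-line limit, so I would not expect any genuine obstacle beyond stating the counting cleanly.
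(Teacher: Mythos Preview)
Your proposal is correct and follows essentially the same counting argument as the paper: two $Q$-bit channel-vector reports over $3N$ symbol times, giving $F(P)=\tfrac{2\alpha}{3N}\log_2 P$ and hence $\text{FB}(2)=\tfrac{2\alpha}{3N}$. The paper's proof is just a terser version of your bookkeeping, so there is nothing further to add.
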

\begin{proof}
From the transmission scheme, it can be seen that receiver $A$ has to feedback its channel vector at symbol time $2$ of the block 2 to align the interference; while receiver $B$ has to feedback its channel vector at symbol time $1$ of the block 1 to align the interference. Therefore, over total $3$ blocks, $2$ channel vectors must be fedback. It follows then the aggregate feedback rate is $\frac{2}{3N}\alpha \log_2 P $ bits/slot and thus the feedback overhead is $\frac{2}{3N}\alpha$.
\end{proof}

Combining Theorem \ref{TheoremDOFTwoUser} and Lemma \ref{TheoremFeedbackTwoUser} gives the net DoF as
\begin{eqnarray}
\widehat{\text {DoF} }_{\text{MAT}} (K= 2) &=& \text{DoF} (K) - \text{FB}(K) \nonumber \\
        &=& \frac{2}{3} \left( (1+ \frac{N-1}{N} \alpha) \land (2  - \frac{\alpha}{N}) \right),
\end{eqnarray}
which has a maximum of $\frac{2(2N-1)}{3N}$.

For comparison, let us consider the single user transmission (SISO) and zero-forcing scheme. For single user transmission, it is easy to see that the net DoF is $1$, since no CSI feedback is needed. For zero-forcing, according to (\ref{EqNetDoFZF}), the maximum net DoF with $\alpha=1$ is
\begin{align}
\widehat{\text {DoF} }_{\text{ZF}} (K=2)=2 \left( 1- \frac{  N_{\rm fd} + 1  }{N} \right).
\end{align}

When $N \le 2$, the maximum net DoF with MAT is no greater than $1$ and thus the MAT scheme cannot provide a net gain in DoF compared to SISO for $K=2$ and any value of $\alpha$. When $N > 3N_{\rm fd} +2 $, the maximum net DoF with MAT is less than that with ZF, and thus the MAT scheme cannot provide a net gain for $K=2$ and any value of $\alpha$ either. However, when $2 \le N \le 3N_{\rm fd} +2$, the MAT scheme can provide a net gain in DoF compared to SISO and ZF for $K=2$ with optimal value of $\alpha=1$. Fig.~\ref{FigNetDoF2} present numerical results of the maximum net DoF for the three techniques respectively. The feedback delay is assumed to be $N_{\rm fd}=100$ for clarity. The actual feedback delay depends on the system design. For example, in a narrowband channel with symbol rate $100$ KHZ, $N_{\rm fd}=100$ means the feedback delay is $1$ msec. From the figure, it can be seen that the MAT scheme does provide a net DoF gain when the $ 2 \le N \le 302$, which is a significant range.





\subsection{Three User Case}
In this subsection, we extend to the three user case. According to \cite{Tse10}, the MAT scheme takes $11$ symbol times in $11$ consecutive blocks. The received signal of user $A$ over the total $11$ symbol times can be written in matrix form as
\begin{align}
\mathbf{y}_A = \mathbf{H}_A^\ast \mathbf{u}_A
+ \mathbf{I}_{B,C}^\ast [ \mathbf{u}_B^t, \mathbf{u}_C^t]^t,
\end{align}
where $\mathbf{u}_{A}$ is a $ 6 \times 1 $ vector of symbols intended only  for user $A$,
$\mathbf{H}_A^\ast$ is a $11 \times 6$ matrix with rank $6$, and $\mathbf{I}_{B,C}^\ast$ is a $11 \times 12$ interference matrix. Notice that  $\mathbf{I}_{B,C}^\ast$ must contain $2$ zero row vectors, since among the $11$ symbol times, $2$ symbol times are used for transmitting the symbols only intended for user $A$ and thus there is no interference.

Without quantization error, the rank of $\mathbf{I}_{B,C}^\ast$ is $5$ and thus the interference lies in a $5$-dimensional subspace.  Therefore, $6$ degrees of freedom among $11$ symbol times can be achieved by zero forcing the interference for each user. With quantization error, the interference will spill out of the $5$-dim subspace and the zero-forcing scheme cannot eliminate all the interference. Similar to the two user case, we have the following lemma, which relating the multiplexing gain with the singular values of $\mathbf{I}_{B,C}^\ast$:$\{ \sigma_1, \ldots, \sigma_{11}\}$.

\begin{lemma} \label{LemmaDOFThreeUser}
For the three user case, the multiplexing gain under the MAT scheme is
\begin{align}
\text{DoF}_{\text{MAT}}(3) = \frac{6}{11} - \frac{3}{11} \sum_{i=6}^9 \left(  \lim_{P \to \infty}   \frac{  \mathbb{E} \left[\log_2 (\sigma_i^2)\right] }{  \log_2 P } \vee (-1) \right). \label{EquationDoFSimulationThreeUser}
\end{align}
\end{lemma}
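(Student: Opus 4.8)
The plan is to mirror the structure of the proof of Lemma~\ref{LemmaDOFTwoUser} (Appendix~\ref{ProofLemmaDOFTwoUser}), working from the matrix representation $\mathbf{y}_A = \mathbf{H}_A^\ast \mathbf{u}_A + \mathbf{I}_{B,C}^\ast [\mathbf{u}_B^t,\mathbf{u}_C^t]^t$. First I would fix attention on a single receiver, say $A$, and set up the achievable rate as the mutual information of an $11$-dimensional vector Gaussian channel with $6$ desired streams and $12$ interfering streams. The key structural facts to invoke are: $\mathbf{H}_A^\ast$ has rank $6$ almost surely, $\mathbf{I}_{B,C}^\ast$ has $2$ identically-zero rows, and without quantization error $\mathrm{rank}(\mathbf{I}_{B,C}^\ast)=5$. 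With quantization error the interference covariance is a perturbation of a rank-$5$ matrix, so its eigenvalues split into $5$ ``large'' ones (scaling like $P$) and the remaining ones governed by the small singular values $\sigma_6,\ldots,\sigma_9$ (the two zero rows remove two more, and one singular value is structurally zero or $O(1)$, leaving exactly the four indices $i=6,\ldots,9$ that appear in the statement). The receiver treats interference as noise (or, equivalently, projects onto the orthogonal complement of the dominant interference subspace and decodes the remaining streams), so the achievable sum-rate for user $A$ behaves, to prelog order, like $\log_2\det(\text{desired} + \text{interference}) - \log_2\det(\text{interference})$, and the $\log_2 P$-scaling of this difference is controlled precisely by how many interference eigenvalues fail to scale with $P$.

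Next I would carry out the high-SNR DoF bookkeeping. Each of the $11$ received dimensions contributes at most $\log_2 P$; the $6$ desired streams would each contribute a full $\log_2 P$ if the interference occupied only $5$ dimensions, but each interference eigenvalue of order $P^{\beta_i}$ with $\beta_i<1$ ``leaks'' and costs the receiver $(1-\beta_i)\log_2 P$ of decodable rate. Writing $\beta_i = 1 + \lim_{P\to\infty} \mathbb{E}[\log_2(\sigma_i^2)]/\log_2 P$ for the non-dominant indices, and noting that a negative raw exponent is clipped at $0$ effective loss (hence the $\vee(-1)$ truncation on $\log_2\sigma_i^2/\log_2 P$, exactly as in the two-user lemma), the per-user DoF is $6 - \sum_{i=6}^{9}\big(\lim_{P\to\infty}\mathbb{E}[\log_2\sigma_i^2]/\log_2 P \vee(-1)\big)$ stream-equivalents over $11$ symbol times. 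By the symmetry of the MAT construction across the three users (the roles of $A$, $B$, $C$ are interchangeable and the channel statistics are i.i.d.), the same expression holds for each user; summing the three contributions and dividing by $11$ gives $\mathrm{DoF}_{\mathrm{MAT}}(3) = \tfrac{6}{11} - \tfrac{3}{11}\sum_{i=6}^{9}(\cdots)$, which is the claimed identity. I would also need to check that interchanging $\lim_{P\to\infty}$ and $\mathbb{E}$ is legitimate here; this follows as in the $K=2$ case from dominated convergence, using that $\log_2\sigma_i^2/\log_2 P$ is bounded above (singular values of a bounded-energy matrix) and bounded below after the $\vee(-1)$ clip.

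The main obstacle I anticipate is the linear-algebraic step of pinning down the eigenvalue structure of $\mathbf{I}_{B,C}^\ast$ under quantization perturbation — specifically, justifying rigorously that exactly four (not three, not five) of the non-dominant interference directions have their $\log_2 P$-scaling tracked by $\sigma_6^2,\ldots,\sigma_9^2$, and that the dominant $5$-dimensional subspace is stable enough that the remaining streams are genuinely decodable at the claimed rate. In the $K=2$ case this was transparent because there was a single relevant singular value $\sigma_2$; for $K=3$ the $11\times 12$ matrix with its two forced-zero rows and the intricate overhearing/retransmission structure of the MAT scheme makes the counting less obvious. I would handle this by exhibiting the explicit block structure of $\mathbf{I}_{B,C}^\ast$ from the transmission protocol (analogous to~\eqref{EquationChannelTwoUser}), identifying which rows correspond to original transmissions versus retransmitted combinations, showing the rank-$5$ collapse at zero quantization error comes from exact alignment of retransmitted interference with previously-overheard interference, and then a standard perturbation argument (Weyl's inequality on singular values) to conclude that the $5$ aligned directions remain $\Theta(P)$ while the leak is $\Theta(\sigma_i^2 P)$ in the four relevant directions. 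The rest of the argument is routine and parallels Appendix~\ref{ProofLemmaDOFTwoUser} essentially verbatim.
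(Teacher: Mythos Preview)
Your overall plan---mirror the two-user argument, project onto the complement of the dominant interference subspace, and read off the prelog from the residual interference singular values---is exactly what the paper does. The paper's execution is shorter than you anticipate: it simply takes the SVD $\mathbf{I}_{B,C}=\mathbf{U}\mathbf{\Sigma}\mathbf{V}^\ast$, chooses the zero-forcing matrix as the last six right singular vectors $\mathbf{U}_A=\mathbf{V}(6{:}11)$, and then the interference determinant collapses to $\prod_{i=6}^{11}(1+\tfrac{P}{3}\sigma_i^2)=\prod_{i=6}^{9}(1+\tfrac{P}{3}\sigma_i^2)$ because the two zero rows force $\sigma_{10}=\sigma_{11}=0$. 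No Weyl-type perturbation bound or explicit block decomposition of $\mathbf{I}_{B,C}^\ast$ is needed at this stage; the lemma is a pure rewriting of the DoF in terms of the $\sigma_i$'s, and the perturbation analysis you sketch (relating the $\sigma_i$'s to $\sin^2\theta$) belongs to the proof of Theorem~\ref{TheoremDOFThreeUser}, not this lemma. So your ``main obstacle'' is misplaced by one result.

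There is also a genuine bookkeeping slip in your high-SNR accounting. An interference eigenvalue of order $P^{\beta_i}$ costs the receiver $\max(\beta_i,0)$ DoF, not $(1-\beta_i)$: full-strength interference ($\beta_i=1$) costs one degree of freedom, while vanishing interference ($\beta_i\le 0$) costs none. With the correct cost, the per-user count is $\bigl[6-\sum_{i=6}^{9}\max(\beta_i,0)\bigr]/11=\bigl[2-\sum_{i=6}^{9}(\lim_{P\to\infty}\mathbb{E}[\log_2\sigma_i^2]/\log_2 P\vee(-1))\bigr]/11$, i.e.\ the per-user constant is $2/11$, not $6/11$; summing over three users then gives the $6/11$ in the statement. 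As written, your per-user expression of ``$6-\sum(\cdots)$ over $11$'' would yield $18/11-\tfrac{3}{11}\sum(\cdots)$ after summing, not $\tfrac{6}{11}-\tfrac{3}{11}\sum(\cdots)$. Fix the sign of the interference cost and drop the perturbation digression, and your proof coincides with the paper's.
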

\begin{proof}
See Appendix \ref{ProofLemmaDOFThreeUser}.
\end{proof}

From Lemma \ref{LemmaDOFThreeUser}, we can derive the following theorem which relates the multiplexing gain to the quantization accuracy $Q$.
\begin{theorem} \label{TheoremDOFThreeUser}
For the three user case, if the quantization rate $Q$ is scaled as $Q= 2 \alpha \log_2 P$ for $\alpha>0$, the multiplexing gain under the MAT scheme is $ \frac{6  (1+2 \alpha)}{11} \land \frac{18}{11}$.
\end{theorem}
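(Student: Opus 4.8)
The plan is to obtain Theorem~\ref{TheoremDOFThreeUser} from Lemma~\ref{LemmaDOFThreeUser} in the same way that Theorem~\ref{TheoremDOFTwoUser} follows from Lemma~\ref{LemmaDOFTwoUser}, the only genuinely new point being a dimension count for the interference that ``leaks'' out of the ideal subspace. By Lemma~\ref{LemmaDOFThreeUser} it suffices to prove that
\begin{align*}
\lim_{P\to\infty}\frac{\mathbb{E}\left[\log_2(\sigma_i^2)\right]}{\log_2 P}=-\alpha\qquad\text{for each }i\in\{6,7,8,9\}.
\end{align*}
Granting this, each of the four summands of~(\ref{EquationDoFSimulationThreeUser}) equals $(-\alpha)\vee(-1)=-(\alpha\land 1)$, so
\begin{align*}
\text{DoF}_{\text{MAT}}(3)=\frac{6}{11}-\frac{3}{11}\cdot 4\cdot\bigl(-(\alpha\land 1)\bigr)=\frac{6}{11}+\frac{12}{11}(\alpha\land 1)=\frac{6(1+2\alpha)}{11}\land\frac{18}{11},
\end{align*}
which is the claim (the right-hand sides agree because $\frac{6(1+2\alpha)}{11}$ is increasing and reaches $\frac{18}{11}$ at $\alpha=1$).

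To prove the displayed limit, write $\mathbf{I}_{B,C}^\ast=\mathbf{I}^{0}+\Delta$, where $\mathbf{I}^{0}$ is the interference matrix the MAT protocol would produce if every fed-back direction were exact. As recalled just before Lemma~\ref{LemmaDOFThreeUser}, $\mathbf{I}^{0}$ has rank $5$ and two structurally zero rows, so $\sigma_6(\mathbf{I}^{0})=\dots=\sigma_{11}(\mathbf{I}^{0})=0$, and $\Delta$ collects, row by row, the mismatch terms that appear because the transmitter forms its retransmitted linear combinations from the quantized directions $\mathbf{\hat{e}}_r[t]$ rather than from $\mathbf{h}_r[t]/|\mathbf{h}_r[t]|$, for $r\in\{B,C\}$ and $t$ ranging over the relevant feedback slots of the $11$-block period. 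Each such mismatch enters $\Delta$ with a factor of norm $\Theta(\sin\theta_{r,t})$, and with $M=3$ and $Q=2\alpha\log_2 P$ the optimal vector quantization bound~(\ref{LemmaQuantization}), together with the concentration of $\sin^2\theta$ at the scale $2^{-Q/(M-1)}$ for the optimal codebook, gives $\mathbb{E}[\log_2\sin^2\theta_{r,t}]=-\frac{Q}{M-1}+O(1)=-\alpha\log_2 P+O(1)$.

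The limit then follows from a two-sided estimate. For the upper bound, Weyl's inequality gives $\sigma_i(\mathbf{I}_{B,C}^\ast)\le\|\Delta\|$ for all $i\ge 6$; since the Gaussian channel gains have integrable logarithm, $\|\Delta\|^2$ is at most a constant times $\max_{r,t}\sin^2\theta_{r,t}$ times channel factors, and hence $\mathbb{E}[\log_2\sigma_i^2]\le-\alpha\log_2 P+O(1)$, i.e.\ $\limsup_P\mathbb{E}[\log_2\sigma_i^2]/\log_2 P\le-\alpha$ for each $i\in\{6,7,8,9\}$. For the matching lower bound, a Gram-determinant (compound-matrix) computation---the direct analogue of the $2\times 2$ determinant used for $K=2$---shows that $\prod_{i=1}^{9}\sigma_i^2$ equals the $5$-dimensional volume carried by $\mathbf{I}^{0}$ (a constant in $P$) times the $4$-dimensional volume of the leakage of $\Delta$ into the orthogonal complement of the column space of $\mathbf{I}^{0}$, and the latter is $\Theta\bigl(\prod_{j=1}^{4}\sin^2\theta_j\bigr)$ for four of the relevant mismatches $\theta_j$. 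Dividing by $\log_2 P$ gives $\sum_{i=6}^{9}\lim_P\mathbb{E}[\log_2\sigma_i^2]/\log_2 P=-4\alpha$; together with the per-coordinate upper bound $\le-\alpha$ this forces each of the four limits to be exactly $-\alpha$, which is what was needed.

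The main obstacle is this lower bound, specifically checking that the leakage is genuinely four-dimensional and non-degenerate: one must read off the exact $11\times 12$ structure of $\mathbf{I}_{B,C}^\ast$ dictated by the $K=3$ MAT protocol---which of the $11$ slots carry combinations useful to two receivers, and which quantized vectors appear in which rows---and then use the independence and genericity of the Gaussian channel coefficients to confine all degeneracies to a set of probability zero, so that the $4$-dimensional leakage volume really is of order $\prod_{j}\sin^2\theta_j$ and no $\sigma_i$ with $6\le i\le 9$ collapses faster than $P^{-\alpha}$. Everything else is the same bookkeeping as in the $K=2$ proof, the only substantive change being that~(\ref{LemmaQuantization}) is now used with $M-1=2$, which is exactly why the scaling $Q=2\alpha\log_2 P$ makes the quantization error behave like $P^{-\alpha}$.
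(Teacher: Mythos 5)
Your reduction of the theorem to the statement $\lim_{P\to\infty}\mathbb{E}[\log_2\sigma_i^2]/\log_2 P=-\alpha$ for $i=6,\dots,9$, and your upper bound via the decomposition $\mathbf{I}_{B,C}^\ast=\mathbf{I}^0+\Delta$ and Weyl's perturbation inequality, coincide with what the paper does (its Appendix proof of (\ref{EqSigmaThreeUser})/(\ref{EqSigmaKUser}) uses exactly this Weyl step for the $O(\cdot)$ direction). Where you genuinely diverge is the lower bound. The paper does not argue through the product of singular values: it invokes the negative second moment identity of Tao and Vu, $\sum_i\sigma_i^{-2}=\sum_i\mathrm{dist}(\mathbf{I}_i,\mathbf{S}_i)^{-2}$, together with the claim that each row-to-span distance is at least of order $\sin\tfrac{\theta}{2}$, to conclude $\sigma^2_{T-D/K}=\Omega(\sin^2\theta)$ directly, so that every relevant $\mathbb{E}[\sigma_i^2]$ is $\Theta(\mathbb{E}[\sin^2\theta])$; it then transfers this to the DoF expression by exchanging expectation and logarithm via Jensen plus an almost-sure-convergence argument, using only the mean bound (\ref{LemmaQuantization}). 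Your alternative --- a Gram-determinant/volume identity giving $\sum_{i=6}^9\mathbb{E}[\log_2\sigma_i^2]=-4\alpha\log_2P+O(1)$, squeezed against the per-index upper bound $\le-\alpha\log_2P+O(1)$ --- is a legitimate and rather elegant mechanism, and it has the advantage of yielding each limit exactly without controlling the smallest singular value in isolation. It also requires, however, the distributional fact $\mathbb{E}[\log_2\sin^2\theta]=-\tfrac{Q}{M-1}+O(1)$, which does not follow from the mean bound (\ref{LemmaQuantization}) alone; you assert it as ``concentration,'' which is true for optimal codebooks but is an extra ingredient the paper sidesteps.

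The substantive gap is the one you yourself flag: the lower bound on the four-dimensional leakage volume, i.e.\ the claim that $\prod_{i=1}^{9}\sigma_i^2=\Theta\bigl(\prod_{j=1}^{4}\sin^2\theta_j\bigr)$ up to channel factors with integrable logarithm. As written this is not established --- you would need to exhibit the $11\times12$ structure of $\mathbf{I}_{B,C}^\ast$ for the $K=3$ protocol (two zero rows, four unperturbed phase-one interference rows, five rows carrying quantized directions), show that the out-of-subspace components of the perturbed rows generically span a full four-dimensional complement with volume bounded below by the product of the quantization errors times a factor whose $\log$ has finite expectation, and rule out cancellations among the several mismatch angles entering a single row. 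This is exactly the point where the paper's negative-second-moment route is more economical: it collapses the whole non-degeneracy question into the single per-row estimate $\mathrm{dist}(\mathbf{I}_i,\mathbf{S}_i)\ge 2\sin\tfrac{\theta}{2}$ (itself stated with only a brief genericity justification), rather than a joint volume bound over four directions. So your plan is a viable different route, but to be a complete proof it still needs the non-degeneracy verification spelled out, or it should be replaced at that point by the distance/identity argument of the paper's appendix.
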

\begin{proof}
See Appendix \ref{ProofTheoremDOFThreeUser}.
\end{proof}


The feedback overhead is given by the following lemma.
\begin{lemma}\label{TheoremFeedbackThreeUser}
For the three user case, if the quantization rate $Q$ is scaled as $Q= 2 \alpha \log_2 P$ for $\alpha>0$, then the feedback overhead under the MAT scheme is $\frac{30}{11N}\alpha$.
\end{lemma}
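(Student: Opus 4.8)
The plan is to count, for the three-user MAT scheme, how many individual channel-vector quantization indices must be fed back over the full $11$-block cycle, exactly as was done in Lemma \ref{TheoremFeedbackTwoUser} for $K=2$. First I would recall that in the $K=3$ MAT protocol of \cite{Tse10} the transmission proceeds in three phases: Phase~1 sends order-$1$ symbols (fresh symbols intended for a single user) while the other two users overhear and feed back their channel directions; Phase~2 recombines overheard equations into order-$2$ symbols (useful for a pair of users) while the excluded user overhears and feeds back; Phase~3 recombines into an order-$3$ symbol useful for all three. For each symbol time in which the BS transmits a symbol of order $j$, there are $3-j$ users who are ``eavesdropping'' and whose delayed quantized channel vector must reach the BS to form the next-order combination. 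So the total feedback count is $\sum$ over all transmitted symbols of $(3-\text{order})$, and I would tabulate this phase by phase against the known symbol-time budget ($11$ slots total, with $2$ order-$1$ slots carrying two symbols each wait — more carefully, $9$ order-$1$ transmissions in Phase~1 over the first slots, $3$ order-$2$ transmissions in Phase~2, $1$ order-$3$ transmission in Phase~3, consistent with $6$ private symbols per user being delivered).

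Next I would convert the raw count of fed-back vectors into a feedback overhead in DoF units. Each quantized channel vector costs $Q = 2\alpha\log_2 P$ bits (this is the $Q$ scaling fixed in Theorem \ref{TheoremDOFThreeUser}, with $M-1 = 2$ so that $Q/(M-1) = \alpha\log_2 P$). If the cycle uses $11$ blocks of $N$ symbols each, i.e. $11N$ symbol times, and the total number of fed-back vectors over the cycle is some integer $m$, then the aggregate feedback rate is $\frac{mQ}{11N} = \frac{2m\alpha}{11N}\log_2 P$ bits per symbol time, so $\mathrm{FB}(3) = \frac{2m\alpha}{11N}$. The claim $\frac{30}{11N}\alpha$ then amounts to showing $m = 15$: that exactly fifteen channel-direction feedbacks are consumed per $11$-block MAT cycle. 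I would verify $m=15$ by the phase tabulation above — for instance Phase~1 contributes the bulk (each order-$1$ transmission generates $2$ eavesdropper feedbacks, but only those eavesdroppers whose equations are actually reused later need feed back, which in the MAT construction is all of them), Phase~2 contributes one feedback per order-$2$ transmission, and Phase~3 contributes none since the order-$3$ symbol is terminal.

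The main obstacle I anticipate is bookkeeping the exact set of channel vectors that genuinely need to be fed back, as opposed to merely overheard. In the MAT scheme a user overhears many equations, but the transmitter only needs the quantized direction of a user's channel in a given slot if it will later form a higher-order combination using that user's overheard equation from that slot; feeding back every overheard channel would overcount. I would resolve this by walking through the dependency structure of \cite{Tse10}'s order-$1$ $\to$ order-$2$ $\to$ order-$3$ recombination: each order-$2$ symbol is built from two order-$1$ overheard equations, and each order-$3$ symbol from three order-$2$ overheard equations, and I would count one feedback per (user, slot) pair that feeds such a recombination. A useful consistency check is the structural fact already asserted in the text that $\mathbf{I}_{B,C}^\ast$ is $11\times 12$ with $2$ zero rows and nominal rank $5$: the number of distinct quantized vectors appearing in the nonzero rows of the interference matrices across all three users, counted without double-counting the shared ones, should match $m=15$, and the final division by $11N$ gives the stated $\frac{30}{11N}\alpha$.
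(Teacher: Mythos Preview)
Your overall framework is right: count the number $m$ of channel-vector feedbacks consumed in one $11$-block cycle, then $\text{FB}(3)=\dfrac{mQ}{11N\log_2 P}=\dfrac{2m\alpha}{11N}$, so the claim reduces to $m=15$. The conversion step and the per-slot rule ``an order-$j$ slot generates $K-j$ eavesdropper feedbacks'' are both sound.

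Where the proposal goes off is the phase tabulation itself. The breakdown you tentatively write down, ``$9$ order-$1$, $3$ order-$2$, $1$ order-$3$,'' does not even sum to $11$ slots, and plugging it into your own rule gives $2\cdot 9+1\cdot 3+0\cdot 1=21\neq 15$. For the $T=11$, $D=6$ scheme the paper is using (six symbols per user), Phase~1 occupies $6$ slots (two per user, each carrying three fresh symbols), Phase~2 occupies $3$ slots (one per pair, each carrying two order-$2$ symbols), and Phase~3 occupies $2$ slots; then $2\cdot 6+1\cdot 3+0\cdot 2=15$ as required. So your method works, but only once the slot count is corrected.

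The paper avoids this bookkeeping entirely. Its argument is the one you relegate to a ``consistency check'': each user's interference must be confined to a $(T-D)=5$-dimensional subspace, and that subspace is pinned down exactly by that user's own channel directions in the five slots where it is an eavesdropper. Hence each user feeds back $5$ vectors, $3\times 5=15$ in total, and $\text{FB}(3)=\dfrac{15\cdot 2\alpha}{11N}=\dfrac{30\alpha}{11N}$. This interference-subspace count is the direct proof, not merely a sanity check, and it generalizes immediately to the $K$-user case (Lemma~\ref{TheoremFeedbackKUser}) without any phase-by-phase enumeration.
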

\begin{proof}
From the transmission scheme, it can be seen that the key role played by the feedback is to ensure the interference in the future transmissions always lies in the $5$-dimensional subspace. Therefore, each receiver have to feedback its channel vector $5$ times to inform the transmitter its  specific $5$-dimensional subspace. Thus, over total $11N$ symbol times, $15$ channel vectors must be fedback in total. It follows then the total feedback rate is $\frac{30}{11N}\alpha \log_2 P $ bits/slot.
\end{proof}

Combining Theorem \ref{TheoremDOFThreeUser} and Lemma \ref{TheoremFeedbackThreeUser} gives the net DoF as
\begin{eqnarray}
\widehat{\text {DoF} }_{\text{MAT}} (K = 3) &=& \text{DoF} (K) - \text{FB}(K) \nonumber \\
        &=& \frac{6}{11} \left( (1+ \frac{2N-5}{N} \alpha) \land (3-\frac{5}{N}\alpha) \right),
\end{eqnarray}
which has a maximum of $\frac{6(3N-5)}{11N}$.

The maximum net DoF with SISO is still $1$ and according to (\ref{EqNetDoFZF}), the maximum net DoF with ZF for $K=3$ is
\begin{align}
\widehat{\text {DoF} }_{\text{ZF}} (K=3)=3 \left( 1- \frac{  N_{\rm fd} + 2  }{N} \right).
\end{align}

When $N \le \frac{30}{7}$, the maximum net DoF with MAT is no greater than $1$ and thus the MAT scheme cannot provide a net gain in DoF compared to SISO for $K=3$ and any value of $\alpha$. When $N > \frac{11 N_{\rm fd} +12}{5}$, the maximum net DoF with MAT is less than for ZF.  The useful range therefore is
\[
\frac{30}{7} \le N \le \frac{11 N_{\rm fd} +12}{5},
\]
for which the MAT scheme can achieve a net gain. Fig.~\ref{FigNetDoF3} shows the maximum net DoF for the three techniques for a feedback delay of $N_{\rm fd}=100$. The MAT scheme provides a net DoF gain in this case for $5 \le N \le 222$.

\subsection{General $K$ User Case}
In this subsection, we generalize the previous results to the $K$ user case. Let $\frac{K}{1+ \frac{1}{2} + \ldots + \frac{1}{K}}= \frac{K D}{T}$, where $D, T \in \mathcal{N}$. According to \cite{Tse10}, the transmission scheme for the $K$ user case takes $T$ symbol times in the $T$ consecutive blocks, and the received signal for user $A$ over $T$ symbol times can be written in matrix form as
\begin{align} \label{EquationChannelKUser}
\mathbf{y}_A = \mathbf{H}_A^\ast \mathbf{u}_A
+ \mathbf{I}_{/ A}^\ast \mathbf{u}_{/A},
\end{align}
where $\mathbf{u}_{A}$ is a $ D \times 1 $ vector of symbols intended for user $A$, $\mathbf{u}_{/A}$ is a $(K-1)D \times 1$ vector of symbols intended for users other than $A$ (the subscript $/ A$ means all other users except user $A$). $\mathbf{H}_A^\ast$ is a $T \times D$ matrix with rank $D$, and $\mathbf{I}_{ / A }^\ast$ is a $T \times (K-1)D $ matrix. Notice that,  $\mathbf{I}_{/ A}^\ast$ contains $\frac{D}{K}$ zero row vectors, since among the $T$ symbol times, $\frac{D}{K}$ slots are used for transmitting $D$ symbols only intended for user $A$.

Without quantization error, the rank of $\mathbf{I}_{/ A}^\ast$ is $(T-D)$ and thus the interference lies in a $(T-D)$-dimensional subspace. Therefore, $D$ degrees of freedom for each user can be achieved by zero forcing the interference. With quantization error, the interference will spill out of the $(T-D)$-dimensional subspace and the zero-forcing scheme cannot eliminate all the interference.
Similar to the two and three user case, we have the following lemma, which relates the  multiplexing gain with the singular values of $\mathbf{I}_{ / A }^\ast$: $\{ \sigma_1, \ldots, \sigma_{T} \}$.
\begin{lemma} \label{LemmaDOFKUser}
For the $K$ user case, the  multiplexing gain under the MAT scheme  is
\begin{align}
\text{DoF}_{\text{MAT}}(K) = \frac{D}{T} -\frac{K}{T}  \sum_{i= T-D+1}^{T-D/K } \left( \lim_{P \to \infty}  \frac{  \mathbb{E} \left[\log_2 (\sigma_i^2)\right] }{  \log_2 P } \vee (-1 ) \right). \label{EquationDoF}
\end{align}
\end{lemma}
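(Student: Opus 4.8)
The plan is to mirror the $K=2$ and $K=3$ arguments with the general block lengths $D$ and $T$. Reinstating the Gaussian noise that was suppressed in the description of the scheme, the observation of user $A$ over the $T$ active symbol times of a cycle is $\mathbf{y}_A = \mathbf{H}_A^\ast\mathbf{u}_A + \mathbf{n}_A$ with $\mathbf{n}_A = \mathbf{I}_{/A}^\ast\mathbf{u}_{/A} + \mathbf{z}_A$, where $\mathbf{u}_A$ and $\mathbf{u}_{/A}$ are the independent Gaussian codeword vectors with covariances of order $P$. Since the receiver knows all channel coefficients and all codebooks are Gaussian, $\mathbf{y}_A$ and $\mathbf{n}_A$ are (conditionally) Gaussian, so the rate delivered to user $A$ per cycle is \emph{exactly} $R_A = \mathbb{E}\big[\log_2\det\mathbf{K}_{\mathbf y_A} - \log_2\det\mathbf{K}_{\mathbf n_A}\big]$, where $\mathbf{K}_{\mathbf y_A}$ and $\mathbf{K}_{\mathbf n_A}$ are the conditional covariances of $\mathbf y_A$ and $\mathbf n_A$ and the expectation is over the fading and the quantization. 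By the symmetry of the MAT construction every user gets the same $R_A$, so the per-symbol-time throughput is $\tfrac{K}{T}R_A$ and $\text{DoF}_{\text{MAT}}(K) = \tfrac{K}{T}\lim_{P\to\infty}R_A/\log_2 P$; hence it suffices to find the prelog of $R_A$.

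The core step is an eigenvalue analysis of the two covariance matrices. Order the singular values of $\mathbf{I}_{/A}^\ast$ as $\sigma_1\ge\cdots\ge\sigma_T\ge 0$. The $\tfrac{D}{K}$ zero rows force $\sigma_{T-D/K+1}=\cdots=\sigma_T=0$; without quantization error the rank is $T-D$, so the $T-D$ largest singular values stay $\Theta(1)$ (bounded away from $0$ and $\infty$ in $P$), while $\sigma_{T-D+1},\dots,\sigma_{T-D/K}$ are precisely the $D(1-\tfrac1K)$ ``spillover'' values that are positive only because of quantization error and shrink with it. Consequently $\mathbf{K}_{\mathbf n_A} = \Theta(P)\,\mathbf{I}_{/A}^\ast(\mathbf{I}_{/A}^\ast)^\dagger + \mathbf{I}_T$ has $T-D$ eigenvalues of order $P$, eigenvalues of order $1+P\sigma_i^2$ for $i=T-D+1,\dots,T-D/K$, and $\tfrac{D}{K}$ eigenvalues of order $1$, so that $\mathbb{E}[\log_2\det\mathbf{K}_{\mathbf n_A}] = (T-D)\log_2 P + \sum_{i=T-D+1}^{T-D/K}\mathbb{E}[\log_2(1+P\sigma_i^2)] + O(1)$. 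For $\mathbf{K}_{\mathbf y_A}$ I would argue that the $D$-dimensional column space of $\mathbf{H}_A^\ast$ and the $(T-D)$-dimensional dominant-interference subspace are in general position --- equivalently that $[\,\mathbf{H}_A^\ast\ \ \mathbf{I}_{/A}^\ast\,]$ has full row rank $T$, which holds because the $\tfrac{D}{K}$ signal-only slots (the zero rows of $\mathbf{I}_{/A}^\ast$) carry a full-rank block of $\mathbf{H}_A^\ast$ untouched by $\mathbf{I}_{/A}^\ast$ and the remaining structure is generic --- so $\mathbf{K}_{\mathbf y_A}\succeq cP\,\mathbf{I}_T$ and $\mathbb{E}[\log_2\det\mathbf{K}_{\mathbf y_A}] = T\log_2 P + O(1)$. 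Subtracting, $R_A = D\log_2 P - \sum_{i=T-D+1}^{T-D/K}\mathbb{E}[\log_2(1+P\sigma_i^2)] + O(1)$.

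To finish, I would invoke $\lim_{P\to\infty}\tfrac{\mathbb{E}[\log_2(1+P\sigma_i^2)]}{\log_2 P} = (1+\gamma_i)^+ = 1 + \big(\gamma_i\vee(-1)\big)$ with $\gamma_i := \lim_{P\to\infty}\tfrac{\mathbb{E}[\log_2\sigma_i^2]}{\log_2 P}$; here $\gamma_i\le 0$ since the entries of $\mathbf{I}_{/A}^\ast$ do not scale with $P$, and the identity is the same concentration argument used in the $K=2$ and $K=3$ proofs, exploiting that $\log_2\sigma_i^2$ has fluctuations that are $O(1)$ against its $\Theta(\log_2 P)$ mean. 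As the index range contains $D-\tfrac{D}{K}$ terms, $\lim_{P\to\infty}R_A/\log_2 P = D - (D-\tfrac{D}{K}) - \sum_{i=T-D+1}^{T-D/K}\big(\gamma_i\vee(-1)\big) = \tfrac{D}{K} - \sum_{i=T-D+1}^{T-D/K}\big(\gamma_i\vee(-1)\big)$, and multiplying by $\tfrac{K}{T}$ gives exactly (\ref{EquationDoF}).

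The main obstacle is the covariance/eigenvalue bookkeeping in the second paragraph: one must show that quantization error perturbs precisely the $D(1-\tfrac1K)$ singular values $\sigma_{T-D+1},\dots,\sigma_{T-D/K}$, and by an amount of the order of the quantization error, while the $T-D$ dominant directions and the $\tfrac{D}{K}$ signal-only directions stay cleanly separated in order of magnitude. For $K=2,3$ this can be read directly off the explicit interference matrices; for general $K$ it has to be extracted from the recursive structure of the MAT construction together with the full-rank and general-position properties of $\mathbf{H}_A^\ast$ and $\mathbf{I}_{/A}^\ast$ inherited from \cite{Tse10}, and the continuity of these quantities as $\hat{\mathbf e}\to\mathbf h$.
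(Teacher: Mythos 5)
Your outline is correct and reaches the paper's formula, but by a genuinely different route. The paper never forms the full $T\times T$ covariances: it applies an explicit zero-forcing receiver $\mathbf{U}_A$ taken as the last $D$ right singular vectors of $\mathbf{I}_{/A}$, so the projected interference determinant collapses exactly to $\prod_{i=T-D+1}^{T-D/K}\bigl(1+\tfrac{P}{K}\sigma_i^2\bigr)$ and only the small singular values ever enter; the prelog then follows from the rank-$D$ property of $\mathbf{H}_A\mathbf{U}_A$, and the total DoF is $K$ times the per-user value. You instead compute the mutual information as $\mathbb{E}[\log_2\det\mathbf{K}_{\mathbf{y}_A}-\log_2\det\mathbf{K}_{\mathbf{n}_A}]$, which buys you an exact (receiver-agnostic) rate expression but costs two extra ingredients the paper's projection avoids: that the top $T-D$ interference singular values stay $\Theta(1)$ in $P$ (so the interference log-det is $(T-D)\log_2P+O(1)$), and that $[\mathbf{H}_A^\ast\ \ \mathbf{I}_{/A}^\ast]$ has full row rank $T$ (the counterpart of the paper's $\operatorname{rank}(\mathbf{H}_A\mathbf{U}_A)=D$); both are plausible and argued at the same level of rigor as the paper. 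Two small calibrations: the claim $\mathbf{K}_{\mathbf{y}_A}\succeq cP\,\mathbf{I}_T$ with a deterministic $c$ is too strong (the smallest eigenvalue of the random Gram matrix is not bounded below by a constant) --- what you actually need, and what suffices, is integrability of the log-determinant so that $\mathbb{E}[\log_2\det\mathbf{K}_{\mathbf{y}_A}]=T\log_2P+O(1)$; and the ``main obstacle'' you flag, namely that the quantization error perturbs the middle $D(1-\tfrac1K)$ singular values by an amount of order the quantization error, is not needed for this lemma at all --- it is exactly the separate estimate $\mathbb{E}[\sigma_i^2]=\Theta(\mathbb{E}[\sin^2\theta])$ proved in the appendix and invoked only in the subsequent theorem relating DoF to $\alpha$; for the lemma you only need the $D/K$ zero rows (forcing $\sigma_i=0$ for $i>T-D/K$) and, in your route, the $\Theta(1)$ behavior of the top $T-D$ values.
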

\begin{proof}
See Appendix \ref{ProofLemmaDOFKUser}.
\end{proof}

The following theorem generalizes Theorem \ref{TheoremDOFTwoUser} and Theorem \ref{TheoremDOFThreeUser}.
\begin{theorem} \label{TheoremDOFKUser}
For the $K$ user case, if the quantization rate $Q$ is scaled as $Q= \alpha (K-1) \log_2 P$ for $\alpha >0$, the multiplexing gain under the MAT scheme is $\left( \frac{1+(K-1)\alpha}{K} \land  1 \right) \text{DoF}^\star (K) $.
\end{theorem}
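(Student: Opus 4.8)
The plan is to combine Lemma~\ref{LemmaDOFKUser} with an estimate of the small singular values of the interference matrix $\mathbf{I}_{/A}^\ast$. Recall that $M=K$, so the scaling $Q=\alpha(K-1)\log_2 P$ makes the optimal quantization error scale as $2^{-Q/(M-1)}=P^{-\alpha}$. The key claim is that for every index $i$ with $T-D+1\le i\le T-D/K$,
\begin{align}
\lim_{P\to\infty}\frac{\mathbb{E}\left[\log_2(\sigma_i^2)\right]}{\log_2 P}=-\alpha. \nonumber
\end{align}
Granting this, the sum in (\ref{EquationDoF}) has $(T-D/K)-(T-D+1)+1=D(K-1)/K$ terms, each equal to $(-\alpha)\vee(-1)=-(\alpha\land1)$, so Lemma~\ref{LemmaDOFKUser} gives
\begin{align}
\text{DoF}_{\text{MAT}}(K)=\frac{D}{T}+\frac{D(K-1)}{T}(\alpha\land1)=\frac{D}{T}\bigl(1+(K-1)(\alpha\land1)\bigr). \nonumber
\end{align}
Using $\text{DoF}^\star(K)=KD/T$ from Theorem~\ref{TheoremDOFNoQuantization} and the elementary identity $\tfrac{1+(K-1)(\alpha\land1)}{K}=\tfrac{1+(K-1)\alpha}{K}\land1$, this equals $\left(\tfrac{1+(K-1)\alpha}{K}\land1\right)\text{DoF}^\star(K)$, recovering the $K=2,3$ cases of Theorems~\ref{TheoremDOFTwoUser} and~\ref{TheoremDOFThreeUser}.

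For the upper bound $\limsup\le-\alpha$, I would write $\mathbf{I}_{/A}^\ast=\mathbf{I}_{/A}^{\ast,0}+\boldsymbol{\Delta}$, where $\mathbf{I}_{/A}^{\ast,0}$ is the interference matrix one would obtain if every quantized direction $\hat{\mathbf{e}}$ were replaced by the true $\mathbf{h}/|\mathbf{h}|$. As in Section~\ref{sectionMAT}, $\mathbf{I}_{/A}^{\ast,0}$ has rank $T-D$, whereas every entry of the perturbation $\boldsymbol{\Delta}$ is a bounded-in-$P$ channel coefficient times a coordinate of a quantization-error vector, so $\|\boldsymbol{\Delta}\|_{\rm op}^2\le C\|\mathbf{H}\|^{2m}\max_{j,\ell}\sin^2\theta_{j,\ell}$ with the maximum over the finitely many (i.e.\ $P$-independent number of) fed-back channel vectors. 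Weyl's inequality for singular values then gives $\sigma_i(\mathbf{I}_{/A}^\ast)\le\sigma_i(\mathbf{I}_{/A}^{\ast,0})+\|\boldsymbol{\Delta}\|_{\rm op}=\|\boldsymbol{\Delta}\|_{\rm op}$ for $i\ge T-D+1$. Bounding $\mathbb{E}[\max_{j,\ell}\sin^2\theta_{j,\ell}]\le\sum_{j,\ell}\mathbb{E}[\sin^2\theta_{j,\ell}]<(\text{const})\,P^{-\alpha}$ via the upper bound in (\ref{LemmaQuantization}), splitting off the event $\{\|\mathbf{H}\|>\log P\}$ to absorb the Gaussian tails, and applying Jensen's inequality (concavity of $\log$), one obtains $\mathbb{E}[\log_2\sigma_i^2]\le\log_2\mathbb{E}[\|\boldsymbol{\Delta}\|_{\rm op}^2]\le-\alpha\log_2 P+O(\log\log P)$, hence $\limsup_P\mathbb{E}[\log_2\sigma_i^2]/\log_2 P\le-\alpha$.

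The lower bound $\liminf\ge-\alpha$ is the crux. For $i\le T-D/K$ the $D/K$ all-zero rows of $\mathbf{I}_{/A}^\ast$ do not force $\sigma_i=0$, and the point is that each such small singular value is controlled \emph{from below} by an order-one Gaussian channel factor times a \emph{single} quantization error $\sin^2\theta$ — the errors enter the leakage with the first power and do not compound across the $K$ phases of the MAT construction. Rather than tracking singular values individually, I would express $\sum_{i=T-D+1}^{T-D/K}\log_2\sigma_i^2$ as a difference of two $\log$-determinant-type quantities built from the channel Gram matrices — the full one (the $\log$ of the product of all nonzero squared singular values) minus the contribution of the $T-D$ order-one singular values, the latter having $O(1)$ expectation and hence vanishing once normalized by $\log_2 P$ — and show that this difference equals, up to order-one fluctuations, $D(K-1)/K$ times $\log_2$ of a typical quantization error. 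The remaining work is tail control: (i) Gaussian anti-concentration, so the channel-Gram factors exceed $P^{-\delta}$ off an event of polynomially small probability; (ii) a union bound over the $2^Q$ codebook cells, each contained in a $\sin^2$-ball of measure $O(\epsilon^{K-1})$ in $\mathbb{CP}^{K-1}$, giving $\Pr[\sin^2\theta<\epsilon]\le C\,2^Q\epsilon^{K-1}$ and hence $\Pr[\sin^2\theta<P^{-\alpha-\delta}]\le C\,P^{-\delta(K-1)}\to0$; and (iii) using the same estimate to show $\mathbb{E}\bigl[(\log_2\sigma_i^2)^-\,\mathbf{1}(\text{bad event})\bigr]=o(\log_2 P)$ by integrating $\Pr[\sigma_i^2<2^{-t}]$ against $dt$. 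Together with the lower bound in (\ref{LemmaQuantization}) on $\mathbb{E}[\sin^2\theta]$, these give $\mathbb{E}[\log_2\sigma_i^2]\ge-(\alpha+\delta)\log_2 P-o(\log_2 P)$ for every $\delta>0$, so $\liminf\ge-\alpha$ and the claim follows. I expect the real difficulty to lie in parts (i)--(iii) and, within them, in unwinding the layered algebra of the $K$-user scheme far enough to justify the ``single power of $\sin^2\theta$'' factorization; by contrast the upper bound and the concluding arithmetic are routine, with the appendix proofs for $K=2,3$ serving as the template.
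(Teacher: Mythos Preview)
Your high-level structure matches the paper's: start from Lemma~\ref{LemmaDOFKUser}, show that each of the $D(K-1)/K$ relevant singular values satisfies $\lim_{P\to\infty}\mathbb{E}[\log_2\sigma_i^2]/\log_2 P=-\alpha$, and then do the arithmetic with $\text{DoF}^\star(K)=KD/T$. Your upper bound via Weyl's perturbation inequality on $\mathbf{I}_{/A}^\ast=\mathbf{I}_{/A}^{\ast,0}+\boldsymbol{\Delta}$ is exactly the paper's argument (Appendix~\ref{ProofEqSigmaKUser}, upper half), and the closing computation is identical.

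The lower bound is where the two routes diverge. The paper does \emph{not} unwind the layered MAT algebra or use a $\log$-determinant decomposition with anti-concentration and codebook tail bounds. Instead it proves the pointwise estimate $\mathbb{E}[\sigma_i^2]=\Theta(\mathbb{E}[\sin^2\theta])$ directly: the lower bound on the smallest nonzero singular value comes from the \emph{negative second moment identity} of Tao--Vu,
\[
\sum_{i=1}^{T-D/K}\sigma_i^{-2}=\sum_{i=1}^{T-D/K}\mathrm{dist}(\mathbf{I}_i,\mathbf{S}_i)^{-2},
\]
together with the observation that each row-to-complement distance $\mathrm{dist}(\mathbf{I}_i,\mathbf{S}_i)$ is at least of order $\sin(\theta/2)$; this immediately gives $\sigma_{T-D/K}^2\ge c\sin^2\theta$ with no need to track the phase structure of the scheme. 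Having both sides of $\mathbb{E}[\sigma_i^2]=\Theta(\mathbb{E}[\sin^2\theta])$, the paper then replaces $\mathbb{E}[\log_2\sigma_i^2]$ by $\log_2\mathbb{E}[\sigma_i^2]$ via Jensen and a short concentration remark. Your program would also succeed, and is in fact more careful about the passage from $\mathbb{E}[\sigma_i^2]$ to $\mathbb{E}[\log_2\sigma_i^2]$ (the paper's ``Jensen becomes equality because $\sigma_i\to0$'' is brisk); but the negative second moment identity is the single idea that makes the paper's lower bound a two-line affair rather than the three-part tail analysis you outline.
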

\begin{proof}
See Appendix \ref{ProofTheoremDOFKUser}.
\end{proof}

When $\alpha \ge 1$, the optimal multiplexing gain without feedback rate constraint $\text{DoF}^\star (K) $ can be achieved; while if $\alpha < \alpha^\star = \frac{\frac{1}{2} + \ldots + \frac{1}{K}}{(1+ \frac{1}{2} + \ldots + \frac{1}{K}) (K-1) }$, the multiplexing gain drops to less than $1$ and outdated CSIT becomes useless.

The feedback overhead is characterized by the following theorem, which generalizes Lemma \ref{TheoremFeedbackTwoUser} and Lemma \ref{TheoremFeedbackThreeUser}.
\begin{lemma} \label{TheoremFeedbackKUser}
For the $K$ user case, if the quantization rate $Q$ is scaled as $Q=\alpha (K-1) \log_2 P$ for $\alpha>0$, then the feedback overhead with the MAT scheme is $\frac{K(K-1)(\frac{1}{2} +\ldots + \frac{1}{K}) }{(1+ \frac{1}{2} + \ldots + \frac{1}{K} )N }\alpha $.
\end{lemma}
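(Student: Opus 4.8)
The plan is to mirror the counting arguments already carried out for $K=2$ and $K=3$, extracting the general pattern from the structure of the MAT scheme. First I would recall from \cite{Tse10} and from the discussion around (\ref{EquationChannelKUser}) the facts I will need: the scheme runs over $T$ blocks; the parameters satisfy $\frac{KD}{T}=\frac{K}{1+\frac12+\cdots+\frac1K}$, so that $\frac{D}{T}=\frac{1}{1+\frac12+\cdots+\frac1K}$; and the entire construction is engineered so that the interference matrix $\mathbf{I}_{/A}^\ast$ seen by any user has rank $T-D$, i.e.\ the interference is confined to a fixed $(T-D)$-dimensional subspace. The role of the finite-rate feedback is exactly to let the transmitter form the Phase-$2,\ldots,K$ retransmissions (the linear combinations $\hat L$ of previously sent symbols) that achieve this confinement simultaneously at every unintended receiver.

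Next I would establish the key combinatorial fact: over the $TN$-symbol super-frame, each receiver must feed back its quantized channel direction exactly $T-D$ times. This is the direct generalization of ``once'' in the $K=2$ case and ``$5=11-6$ times'' in the $K=3$ case. The cleanest justification reuses the paper's own reasoning for $K=3$: the $(T-D)$-dimensional interference subspace at receiver $A$ is spanned by directions determined by the quantized channel vectors that the \emph{other} receivers fed back after overhearing equations intended for $A$; pinning down a $(T-D)$-dimensional subspace in this retrospective-alignment construction requires $T-D$ such directions, and by the symmetry of the scheme each receiver serves this role the same number of times. Equivalently, one can walk through Phases $1,\ldots,K$ of the MAT scheme and tally the symbol times at which a given receiver has overheard an equation that is later recycled by the transmitter; this tally equals $T-D$. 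Either way, the total number of quantized channel vectors fed back across all $K$ receivers over the $TN$ symbol times is $K(T-D)$.

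Finally I would convert the count into feedback DoF. Each fed-back channel vector costs $Q=\alpha(K-1)\log_2 P$ bits, so the aggregate feedback rate is $F(P)=\dfrac{K(T-D)\,\alpha(K-1)\log_2 P}{TN}$ bits per symbol time, whence $\text{FB}(K)=\lim_{P\to\infty}\frac{F(P)}{\log_2 P}=\dfrac{K(K-1)\alpha}{N}\bigl(1-\tfrac{D}{T}\bigr)$. Substituting $1-\frac{D}{T}=1-\frac{1}{1+\frac12+\cdots+\frac1K}=\frac{\frac12+\cdots+\frac1K}{1+\frac12+\cdots+\frac1K}$ yields exactly $\text{FB}(K)=\dfrac{K(K-1)(\frac12+\cdots+\frac1K)}{(1+\frac12+\cdots+\frac1K)N}\alpha$, as claimed; as a sanity check this specializes to $\frac{2}{3N}\alpha$ at $K=2$ and $\frac{30}{11N}\alpha$ at $K=3$, matching Lemma \ref{TheoremFeedbackTwoUser} and Lemma \ref{TheoremFeedbackThreeUser}.

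The main obstacle is the middle step: rigorously arguing that the per-receiver feedback count is precisely $T-D$, not merely bounded by it. For $K=2,3$ this was verified by inspecting the explicit transmission tables, but for general $K$ it requires a careful accounting, across the $K$ phases of the MAT scheme, of how many overheard equations originating at each receiver are reused in later transmissions, together with the observation that no fewer feedbacks can drive $\mathrm{rank}\,\mathbf{I}_{/A}^\ast$ down to $T-D$. The ``$\frac{D}{K}$ zero rows'' structure of $\mathbf{I}_{/A}^\ast$ noted below (\ref{EquationChannelKUser}) should enter here, since those symbol times carry no interference and hence demand no alignment feedback. I would present this count as the heart of the proof and reduce everything else to the one-line arithmetic substitution above.
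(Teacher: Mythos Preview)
Your proposal is correct and follows essentially the same approach as the paper: the paper's proof likewise asserts that each receiver must feed back its channel vector $(T-D)$ times so that its $(T-D)$-dimensional interference subspace is known to the transmitter, yielding $K(T-D)$ vectors over $NT$ symbol times and hence feedback rate $\frac{K(T-D)Q}{NT}=\frac{K(K-1)(\frac12+\cdots+\frac1K)}{(1+\frac12+\cdots+\frac1K)N}\alpha\log_2 P$. Your honesty about the combinatorial ``$T-D$ per receiver'' count being the only nontrivial step is well placed---the paper simply asserts it without the phase-by-phase accounting you outline.
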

\begin{proof}
From the transmission scheme, it can be induced that the key role played by the feedback is to ensure the interference for a particular user in future transmissions always lies in a $(T-D)$-dimensional subspace. Therefore, each receiver has to feedback its channel vector $(T-D)$ times to inform the transmitter its specific $(T-D)$-dimensional subspace. Thus, over total $NT$ symbol times, $K(T-D)$ channel vectors need feedback. It follows that the feedback rate is $\frac{K(K-1)(\frac{1}{2} +\ldots + \frac{1}{K}) }{(1+ \frac{1}{2} + \ldots + \frac{1}{K} )N }\alpha \log_2 P $ bits/slot. Finally, the theorem follows by invoking the definition of feedback overhead.
\end{proof}

Combining Theorem \ref{TheoremDOFKUser} and Theorem \ref{TheoremFeedbackKUser} gives the net DoF as
 \begin{eqnarray}
\widehat{\text {DoF} }_{\text{MAT}} (K) &=& \text{DoF} (K) - \text{FB}(K) \nonumber \\
        &=& \frac{N+\alpha (K-1) ( N - K (\frac{1}{2} + \ldots + \frac{1}{K}) ) }{ (1+ \frac{1}{2} + \ldots + \frac{1}{K}) N } \land \frac{K(N- \alpha (K-1) (\frac{1}{2} + \ldots + \frac{1}{K} ) )}{(1+ \frac{1}{2} + \ldots + \frac{1}{K})N} , \label{EquationNetDoF}
\end{eqnarray}
 which has a maximum of $\frac{K(N-(K-1)( \frac{1}{2} + \ldots + \frac{1}{K})) } { (1+ \frac{1}{2} + \ldots + \frac{1}{K})N }$.

The maximum net DoF with SISO is still $1$ and according to (\ref{EqNetDoFZF}), the maximum net DoF with ZF for $K=3$ is
\begin{align}
\widehat{\text {DoF} }_{\text{ZF}} (K)=K \left( 1- \frac{  N_{\rm fd} + K-1  }{N} \right).
\end{align}

As in the $K=2$ and $K=3$ case, we can identify a range where the MAT scheme is worthwhile. When $N \le \frac{K(K-1)( \frac{1}{2} + \ldots + \frac{1}{K} )}{ K- ( 1+ \frac{1}{2} + \ldots + \frac{1}{K}) } = (1+o(1)) K \log K $, single user SISO is better, while for $N \ge  \frac{ ( 1+ \frac{1}{2} + \ldots + \frac{1}{K}) N_{\rm fd}+ K-1 } {\frac{1}{2} + \ldots + \frac{1}{K} } = (1+o(1)) (N_{\rm fd}+ K / \log K)(1-\log^{-1}K )^{-1}$, zero forcing is preferable.  The MAT scheme is the best for all other values of $N$ when using the optimal value of $\alpha=1$.

Fig.~\ref{FigDoF} provides a visual summary of the main results. We fix the number of users $K$ and vary the block duration $N$, and plot the net DoF attained by the MAT, single user transmission (SISO), and zero-forcing (ZF). The three different channel coherence time regimes can be immediately observed: (i) short coherence time ($N \le (1+o(1)) K \log K $), (ii) moderate coherence time ($(1+o(1)) K \log K < N < (1+o(1))  (N_{\rm fd}+  K / \log K)(1-\log^{-1}K)^{-1} $), and (iii) long coherence time ($N \ge (1+o(1)) (N_{\rm fd}+ K / \log K)(1-\log^{-1}K)^{-1}$).  One would prefer to choose SISO for (i), MAT for (ii) and ZF for (iii).

\subsection{Analog vs. Digital Feedback}
An alternative way to feed back CSI is  analog feedback, where each receiver feeds back its channel vector $\mathbf{h}_r$ by explicitly transmitting $M$ complex coefficients over a unfaded additive Gaussian noise feedback channel:
\begin{align}
\mathbf{G} [t] =\sqrt{P} \mathbf{H}[t-N_{\rm fd}] + \mathbf{Z} [t],
\end{align}
where $\mathbf{G} [t]$ is the received channel state feedback and $\mathbf{Z} [t]$ is the Gaussian noise in feedback channel. As $\mathbf{H}[t]$ is composed of i.i.d. complex Gaussian with unit variance, the optimal estimator of CSI is the MMSE estimator given by
\begin{align}
\mathbf{\hat{H}} [t-N_{\rm fd}]  = \frac{\sqrt{P}}{1+ \beta P } \mathbf{G} [t],
\end{align}
where $\mathbf{\hat{H}} [t] $ is the estimator of true channel state $\mathbf{H}[t]$. Since we are interested in the scaling rate of the feedback rate with respect to $\log_2 P$ as $ P \to \infty$, the estimator noise can be neglected and the CSIT is accurate and only subject to feedback delay. Therefore, the DoF with noiseless analog feedback is the same as that with accurate CSIT, and the sum feedback overhead is $\frac{K^2}{N}$. Then, the net DoF is given by
\begin{align}
\widehat{\text {DoF} }_{\text{MAT}} (K) &= \frac{ K (N-K( 1+\frac{1}{2} + \ldots + \frac{1}{K} ) ) } { (1+ \frac{1}{2} + \ldots + \frac{1}{K})N }, \\
\widehat{\text {DoF} }_{\text{ZF}} (K) & =K \left( 1- \frac{  N_{\rm fd} + K  }{N} \right).
\end{align}
Compare it to the result with digital (quantized) feedback, we see that the net DoF with analog feedback is almost the same as the maximum net DoF ($\alpha=1$) with digital feedback. Therefore, the tradeoff between coherence time, feedback rate/delay, and the transmission techniques shown in Fig.~\ref{FigDoF} remains the same with analog feedback. Note that the digital feedback appears to be more flexible than analog feedback, since it can adjust $\alpha$ to meet the feedback rate constraints and achieve a gradual degradation of net DoF while analog feedback is only feasible when feedback overhead $\frac{K^2}{N}$ is supportable in the feedback channel.

\subsection{Design Guidelines}
\label{sec:LTE}
This subsection translates the previous analytical results into rough design guidelines for a real-world system.  To be concrete, we adopt the parameters used in the 3GPP LTE standard \cite{LTEBook}. The carrier frequency is chosen to be $f_c=2.1$ GHz and resources are allocated to users in ``resource blocks'' in the time-frequency grid consisting of $12$ subcarriers, spanning $180$ KHz in frequency, over $14$ OFDM symbols, which spans $1$ msec in time. Therefore, a data symbol slot is effectively $ T_s= 1/168$ msec since there are $12 \times 14 = 168$ symbols sent in a msec. The typical CSI feedback delay is assumed to be an LTE frame, which is $10$ msec, so $N_{\rm fd} = 10 \times 168 =1680$ symbols.  Assuming the standard relation between channel coherence time, Doppler spread, and user velocity $v$, we have $v = \frac{c}{f_c N T_s}$m/s, where $c$ is the speed of light. Then, based on the results in the previous section, approximate regimes where the MAT scheme achieves a net DOF gain are summarized in Table~\ref{tapcap}. We caution against taking this table too literally, since other factors not modeled in this paper may play a role in the regimes of optimality, but nevertheless it appears that the MAT scheme provides a net DoF gain for a very large range of mobility.  The upper limit is where the mobility is so high that it is better to switch to simple single user transmission, whereas the lower bound on velocity represents the crossing point over to ZF precoding.  We observe that as the number of antennas (and thus users) increases, ZF precoding takes on increasing role since its maximum achievable DoF is $K$ vs. MAT's $K/\log K$ and the $\log K$ gap becomes more significant.

\appendix
{
\subsection{Proof of Lemma \ref{LemmaDOFTwoUser}} \label{ProofLemmaDOFTwoUser}
\begin{proof}
As can be seen in (\ref{EquationChannelTwoUser}), in order to achieve the maximal multiplexing gain, receiver A must attempt to zero-force the interference, i.e.,
\begin{align}
\mathbf{U}_A^\ast \mathbf{y}_A =
\mathbf{U}_A^\ast \mathbf{H}^{\ast}_A
\begin{bmatrix}
u_A \\
v_A \\
\end{bmatrix}
+ \mathbf{U}_A^\ast \mathbf{I}^{\ast}_B
\begin{bmatrix}
u_B \\
v_B \\
\end{bmatrix},
\end{align}
where $\mathbf{U}_A^\ast$ is a $ 2 \times 3$ zero-forcing matrix. The rank of $\mathbf{U}_A^\ast$ must be $2$ to recover $u_A$ and $v_A$. Then, the average throughput of receiver $A$ can be derived as
\begin{align}
R_A(P) = \frac{1}{3} \mathbb{E} \left[ \log_2 \frac{\det \left( \mathbf{I}_2 + \frac{P}{2} \mathbf{H}_A \mathbf{U}_A \mathbf{U}_A^\ast \mathbf{H}_A^\ast  + \frac{P}{2} \mathbf{I}_B \mathbf{U}_A \mathbf{U}_A^\ast \mathbf{I}_B^\ast  \right) }{ \det \left( \mathbf{I}_2 + \frac{P}{2} \mathbf{I}_B \mathbf{U}_A \mathbf{U}_A^\ast \mathbf{I}_B^\ast  \right)} \right].
\end{align}
Define the singular value decomposition $\mathbf{I}_B= \mathbf{U} \mathbf{\Sigma} \mathbf{V}^\ast$, where $\mathbf{\Sigma} = \text{diag} \{ \sigma_1, \sigma_2 \}$ and $\mathbf{U}_A^\ast$ is chosen by canceling as much interference as possible. Using matrix analysis, we can derive
\begin{align}
& \det \left( \mathbf{I}_2 + \frac{P}{2} \mathbf{I}_B \mathbf{U}_A \mathbf{U}_A^\ast \mathbf{I}_B^\ast  \right) \nonumber \\
& \overset{(a)} {=} \det \left( \mathbf{I}_2 + \frac{P}{2}  \mathbf{U}_A^\ast \mathbf{I}_B^\ast \mathbf{I}_B \mathbf{U}_A  \right) \nonumber \\
& \overset{(b)} {=} (1+ \frac{P}{2} \lambda^2_1)(1+ \frac{P}{2}\lambda^2_2) \nonumber \\
& \overset {(c)} {\ge}  1+  \frac{P}{2} ( \lambda^2_1 + \lambda^2_2) \nonumber \\
& \overset{(d)} {=}  1 +  \frac{P}{2} \| \mathbf{I}_{B} \mathbf{U}_A \|_F \nonumber \\
& \overset{(e)}{\ge} 1 +  \frac{P}{2} \sigma^2_2,
\end{align}
where $(a)$: follows from the fact that $\det (\mathbf{I} + \mathbf{A}\mathbf{B}) = \det(\mathbf{I} + \mathbf{B}\mathbf{A} )$.

$(b)$: follows from the definition that $\lambda_1,\lambda_2$ are the singular values of $\mathbf{I}_{B} \mathbf{U}_A$.

$(c)$: follows from neglecting the remaining nonnegative parts $\lambda^2_1\lambda^2_2$.

$(d)$: follows from the definition of Frobenius norm $\| \cdot \|_F$.

$(e)$: follows from the fact that $\min_{\mathbf{U}_A: \mathbf{U}_A^\ast \mathbf{U}_A = \mathbf{I}} \| \mathbf{I}_{B} \mathbf{U}_A \|_F = \sigma_2^2$.

Choose $\mathbf{U}_A$ to be the last $2$ columns of $\mathbf{V}$, i.e., $\mathbf{U}_A= \mathbf{V}(2:3)$, the lower bound is achieved. The interference power therefore becomes
\begin{align}
\det \left( \mathbf{I}_2 + \frac{P}{2} \mathbf{I}_B \mathbf{U}_A \mathbf{U}_A^\ast \mathbf{I}_B^\ast  \right) = 1+ \frac{P}{2} \sigma_{2}^2. \label{EqInterPowerTwoUser}
\end{align}
Then, by the definition of multiplexing gain in (\ref{EquationDefDoF}), we have $\text{DoF}_A (2)$
\begin{align}
 & {=}  \frac{1}{3} \lim_{P \to \infty} \left(  \frac{\mathbb{E} \left[ \log_2 \det \left( \mathbf{I}_2 + \frac{P}{2} \mathbf{H}_A \mathbf{U}_A \mathbf{U}_A^\ast \mathbf{H}_A^\ast +\frac{P}{2} \mathbf{I}_B \mathbf{U}_A \mathbf{U}_A^\ast \mathbf{I}_B^\ast  \right) \right] }{ \log_2 P}- \frac{\mathbb{E} \left[ \log_2 \det \left( \mathbf{I}_2 + \frac{P}{2} \mathbf{I}_B \mathbf{U}_A \mathbf{U}_A^\ast \mathbf{I}_B^\ast  \right) \right]}{\log_2 P } \right) \nonumber \\
 &\overset{(a)} {=}\frac{2}{3} - \frac{1}{3} \lim_{P \to \infty} \frac{\mathbb{E} \log_2 \det \left( \mathbf{I}_2 + \frac{P}{2} \mathbf{I}_B \mathbf{U}_A \mathbf{U}_A^\ast \mathbf{I}_B^\ast  \right)}{\log_2 P } \nonumber \\
 & \overset{(b)}{=} \frac{2}{3}- \frac{1}{3} \left( \lim_{P \to \infty} \frac{ \mathbb{E} \left[\log_2 ( P \sigma_2^2)\right] }{  \log_2 P } \vee 0 \right)\nonumber \\
& {=} \frac{1}{3} - \frac{1}{3} \left( \lim_{P \to \infty} \frac{ \mathbb{E} \left[\log_2 (  \sigma_2^2)\right] }{  \log_2 P } \vee (-1) \right),
\end{align}
where $(a)$ follows from the fact that the rank of $ \mathbf{H}_A \mathbf{U}_A$ is $2$ almost surely and that the signal power dominates the interference power when $P \to \infty$; $(b)$ follows from (\ref{EqInterPowerTwoUser}) and let $P \to \infty$. Finally, the proof is completed by considering receiver B similarly.
\end{proof}

\subsection{Proof of Theorem \ref{TheoremDOFTwoUser}} \label{ProofTheoremDOFTwoUser}
\begin{proof}
From Lemma \ref{LemmaDOFTwoUser}, the multiplexing gain can be derived as
\begin{align}
\text{DoF} (2)
& =   \frac{2}{3}- \frac{2}{3} \left( \lim_{P \to \infty}  \mathbb{E} \left[ \frac{ \log_2 ( \sigma_2^2) }{  \log_2 P } \right] \vee (-1) \right) \nonumber \\
& \overset{(a)} {=}   \frac{2}{3}- \frac{2}{3} \left( \lim_{P \to \infty}   \frac{ \log_2 (  \mathbb{E} \left[\sigma_2^2\right] ) }{  \log_2 P } \vee (-1) \right) \nonumber \\
 & \overset{(b)}{=} \frac{2}{3} - \frac{2}{3} \left( \lim_{P \to \infty} \frac{ \log_2 ( \mathbb{E} \left[ \sin^2 \theta \right] )  } { \log_2 P }  \vee (-1)  \right) \nonumber \\
 &  \overset{(c)} {=} \frac{2}{3} + \frac{2}{3} \left( \lim_{P \to \infty} \frac{B}{\log_2 P} \land 1 \right)\nonumber \\
 & = \frac{2}{3} + \frac{2}{3} \left( \alpha \land 1 \right) ,
\end{align}
where $(a)$ follows from Jensen's inequality and $\log(x)$ being a concave function, which gives the inequality $\ge$. Moreover, since $P \to \infty$ and thus $B \to \infty$, $\sigma_2$ converges to $0$ almost surely. Therefore, the inequality $\ge$ becomes an equality.

$(b)$ follows from the fact that
\begin{align}
\sigma_{2}^2= 1- \cos \theta = \Theta \left( \sin^2 \theta \right), \label{EqSigmaTwoUser}
\end{align}
which is proved in Appendix \ref{ProofEqSigmaTwoUser}.

$(c)$ follows from the lower and upper bound of $ \mathbb{E} \left[ \sin^2 \theta \right]$ in (\ref{LemmaQuantization}).
\end{proof}

\subsection{Proof of Lemma \ref{LemmaDOFThreeUser}} \label{ProofLemmaDOFThreeUser}
\begin{proof}
Focus on receiver A first. Similar to the two user case, the average throughput of receiver $A$ is given by
\begin{align}
R_A(P) = \frac{1}{11} \mathbb{E} \left[\log_2 \frac{\det \left( \mathbf{I}_6 + \frac{P}{3} \mathbf{H}_A \mathbf{U}_A \mathbf{U}_A^\ast \mathbf{H}_A^\ast  +  \frac{P}{3} \mathbf{I}_{B,C} \mathbf{U}_A \mathbf{U}_A^\ast \mathbf{I}_{B,C}^\ast \right) }{ \det \left( \mathbf{I}_{12} + \frac{P}{3} \mathbf{I}_{B,C} \mathbf{U}_A \mathbf{U}_A^\ast \mathbf{I}_{B,C}^\ast  \right)} \right].
\end{align}
Choose $\mathbf{U}_A$ as the last $6$ columns of $\mathbf{V}$: $\mathbf{U}_A= \mathbf{V}(6:11)$, where $\mathbf{I}_{BC}= \mathbf{U} \mathbf{\Sigma} \mathbf{V}^\ast$ is the singular value decomposition. Then, the interference power can be simplified as
\begin{align}
\det \left( \mathbf{I}_{12} + \frac{P}{3} \mathbf{I}_{B,C} \mathbf{U}_A \mathbf{U}_A^\ast \mathbf{I}_{B,C}^\ast  \right) = \prod_{i=6}^{11} (1+ \frac{P}{3} \sigma_i^2)= \prod_{i=6}^{9} (1+ \frac{P}{3} \sigma_i^2),
\end{align}
where the last equality follows from the fact that $\sigma_{10}=\sigma_{11}=0$ since $\mathbf{I}_{B,C}$ contains two zero row vectors. Then the lemma naturally follows by considering the definition of the multiplexing gain and the fact that the rank of $\mathbf{H}_A \mathbf{U}_A$ is $6$.
\end{proof}

\subsection{Proof of Theorem \ref{TheoremDOFThreeUser}} \label{ProofTheoremDOFThreeUser}
\begin{proof}
From Lemma \ref{LemmaDOFThreeUser}, we have
\begin{align}
\text{DoF}(3) & = \frac{6}{11} - \frac{3}{11}  \sum_{i=6}^9 \left( \lim_{P \to \infty}  \frac{  \mathbb{E} \left[\log_2 (\sigma_i^2)\right] }{  \log_2 P } \vee (-1) \right)\nonumber \\
& =\frac{6}{11} - \frac{3}{11} \sum_{i=6}^9  \left( \lim_{P \to \infty}  \frac{  \log_2 ( \mathbb{E} \left[ \sigma_i^2 \right])  }{  \log_2 P } \vee (-1) \right)\nonumber \\
& \overset{(a)} {=}  \frac{6}{11}- \frac{3}{11} \sum_{i=6}^9 \left( \lim_{P \to \infty}  \frac{  \log_2 ( \mathbb{E} \left[ \sin^2 \theta)\right] }{  \log_2 P } \vee (-1) \right) \nonumber \\
& = \frac{6}{11}+ \frac{12}{11} \left( \frac{B}{ 2 \log_2 P} \land 1 \right)\nonumber \\
& = \frac{6}{11}+ \frac{12}{11} \left( \alpha \land 1 \right),
\end{align}
where $(a)$ follows from the fact that
\begin{align}
\mathbb{E} [\sigma_i^2] =  \Theta \left(  \mathbb{E} [\sin^2 \theta] \right), \text{ for } i=6,\ldots,9, \label{EqSigmaThreeUser}
\end{align}
which is proved in Appendix \ref{ProofEqSigmaKUser}.
\end{proof}

\subsection{Proof of Lemma \ref{LemmaDOFKUser} } \label{ProofLemmaDOFKUser}
\begin{proof}
Focus on receiver A first. As can be seen in (\ref{EquationChannelKUser}), in order to achieve the maximal multiplexing gain, receiver A must try to zero-forcing the interference, i.e.,
\begin{align}
\mathbf{U}^\ast_A \mathbf{y}_A = \mathbf{U}^\ast_A\mathbf{H}_A^\ast \mathbf{u}_A
+  \mathbf{U}^\ast_A \mathbf{I}_{/ A}^\ast  \mathbf{u}_{/A},
\end{align}
where $\mathbf{U}_A^\ast$ is a $ D \times T$ zero-forcing matrix of rank $D$. Then, the average throughput of user $A$ is
\begin{align}
R_A(P) = \frac{1}{T} \mathbb{E} \left[ \log_2 \frac{\det \left( \mathbf{I}_{D} + \frac{P}{K} \mathbf{H}_A \mathbf{U}_A \mathbf{U}_A^\ast \mathbf{H}_A^\ast +  \frac{P}{K} \mathbf{I}_{/ A} \mathbf{U}_A \mathbf{U}_A^\ast \mathbf{I}_{ / A}^\ast \right) }{ \det \left( \mathbf{I}_{(K-1)D} + \frac{P}{K} \mathbf{I}_{/ A} \mathbf{U}_A \mathbf{U}_A^\ast \mathbf{I}_{ / A}^\ast  \right)} \right].
\end{align}
Choose $\mathbf{U}_A$ as the last $D$ columns of $\mathbf{V}$: $\mathbf{U}_A= \mathbf{V}(T-D+1:T)$, where $\mathbf{I}_{/ A}= \mathbf{U} \mathbf{\Sigma} \mathbf{V}^\ast$ is the singular value decomposition. Then, the interference power becomes
\begin{align}
\det \left( \mathbf{I}_{(K-1)D} + \frac{P}{K} \mathbf{I}_{/ A} \mathbf{U}_A \mathbf{U}_A^\ast \mathbf{I}_{/ A}^\ast  \right) = \prod_{i=T-D+1}^{T } (1+ \frac{P}{K} \sigma_i^2)= \prod_{i=T-D+1}^{T-D / K } (1+ \frac{P}{K} \sigma_i^2), \label{EqInterPowerKUser}
\end{align}
where the last equality follows from the fact that $\mathbf{I}_{/ A}$ contains $\frac{D}{K}$ zero row vectors.

Then, by the definition of multiplexing gain in (\ref{EquationDefDoF}), we have
\begin{align}
\text{DoF}_A (K) &= \lim_{P \to \infty} \frac{R_A(P)}{ \log_2 P} \nonumber \\
& \overset{(a)}{=} \frac{D}{T} - \frac{1}{T} \lim_{P \to \infty} \mathbb{E} \left[ \log_2 \det \left( \mathbf{I}_{(K-1)D} + \frac{P}{K} \mathbf{I}_{/ A} \mathbf{U}_A \mathbf{U}_A^\ast \mathbf{I}_{ / A}^\ast  \right) \right] \nonumber \\
& \overset{(b)}{=} \frac{D}{T}- \frac{1}{T} \sum_{i= T-D+1}^{T-D/K } \left( \lim_{P \to \infty}   \frac{ \mathbb{E} \left[\log_2 ( P \sigma_i^2)\right] }{  \log_2 P } \vee 0 \right) \nonumber \\
&=\frac{D}{KT} - \frac{1}{T} \sum_{i= T-D+1}^{T-D/K } \lim_{P \to \infty}  \left(  \frac{  \mathbb{E} \left[\log_2 (\sigma_i^2)\right] }{  \log_2 P } \vee (-1) \right),
\end{align}
where $(a)$ follows from the fact that $\mathbf{H}_A \mathbf{U}_A$ is of rank $D$ and that the signal power dominates the interference power when $P \to \infty$; $(b)$ follows from (\ref{EqInterPowerKUser}) and let $P \to \infty$. Finally, the proof can be readily completed by considering all the users.
\end{proof}

\subsection{Proof of Theorem \ref{TheoremDOFKUser}} \label{ProofTheoremDOFKUser}
\begin{proof}
From Lemma \ref{LemmaDOFKUser}, we have
\begin{align}
\text{DoF}(K) & = \frac{D}{T} - \frac{K}{T}  \sum_{i= T-D+1}^{T-D/K } \left( \lim_{P \to \infty}    \frac{\mathbb{E} \left[\log_2 (\sigma_i^2)\right] }{  \log_2 P } \vee (-1) \right) \nonumber \\
& \overset{(a)}{=} \frac{D}{T} - \frac{K}{T}   \sum_{i= T-D+1}^{T-D/K } \left(\lim_{P \to \infty}   \frac{ \log_2 ( \mathbb{E} \left[\sigma_i^2 \right]) }{  \log_2 P } \vee (-1) \right) \nonumber \\
& \overset{(b)} {=} \frac{D}{T} - \frac{(K-1)D }{T}  \left( \lim_{P \to \infty} \frac{ \log_2 (\mathbb{E} \left[ \sin^2 \theta \right] )  } { \log_2 P } \vee (-1) \right) \nonumber \\
& \overset{(c)} {=} \frac{D}{T} + \frac{(K-1)D }{T} \left( \lim_{P \to \infty} \frac{B}{(K-1)\log_2 P } \land 1 \right) \nonumber \\
& = \frac{D}{T} + \frac{(K-1)D }{T} \left( \alpha \land 1 \right) ,
\end{align}
where $(a)$ follows from the Jensen's inequality and the fact that $\log(x)$ is a concave function: we have the inequality $\ge$. Moreover, since $P \to \infty$, $B \to \infty$, $\sigma_i$ converges to zero almost surely. Therefore, the inequality $\ge$ becomes equality.

$(b)$ follows from the fact that
\begin{align}
\mathbb{E}[\sigma_i^2] =  \Theta \left( \mathbb{E}[\sin^2 \theta] \right), \text{ for } i=T-D+1, \ldots, T- \frac{D}{K}, \label{EqSigmaKUser}
\end{align}
which is proved in Appendix \ref{ProofEqSigmaKUser}.

$(c)$ follows from the lower and upper bound of $ \mathbb{E} \left[ \sin^2 \theta \right]$  in (\ref{LemmaQuantization}).
\end{proof}

\subsection{Proof of (\ref{EqSigmaTwoUser})} \label{ProofEqSigmaTwoUser}
\begin{proof}
To find the $\sigma_2$, it suffices to consider the last two rows of $\mathbf{I}_B^\ast$, denoted by $\hat{ \mathbf{I}}_B^\ast$. Let us define a new $2$-dimensional unit norm vector $\mathbf{\alpha}$ such that its angles to the two rows of $ \mathbf{I}_B^\ast $ are the same. Also, define $\mathbf{\alpha}^\bot $ as the $2$-dimensional unit norm vector which is orthogonal to $\mathbf{\alpha}$.  Then we have the following singular value decomposition of $\hat{ \mathbf{I}}_B^\ast$:
\begin{align} \label{EqSVDTwoUser}
\hat{ \mathbf{I}}_B^\ast = \begin{bmatrix}
\frac{1}{\sqrt{2}} & \frac{1}{\sqrt{2}} \\
\frac{1}{\sqrt{2}} & -\frac{1}{\sqrt{2}}
\end{bmatrix}
\begin{bmatrix}
\sqrt{2} \cos \frac{\theta}{2}  & 0 \\
0 & \sqrt{2} \sin \frac{\theta}{2}
\end{bmatrix}
\begin{bmatrix}
\mathbf{\alpha}^\ast \\
\left(\mathbf{\alpha}^\bot \right)^\ast
\end{bmatrix}.
\end{align}
Therefore, $\sigma^2_2 = 2 \sin^2 \frac{\theta}{2}=\Theta \left( \sin^2 \theta \right)$.
\end{proof}

\subsection{Proof of (\ref{EqSigmaThreeUser}) and (\ref{EqSigmaKUser})} \label{ProofEqSigmaKUser}
\begin{proof}
It suffices to consider the nonzero rows of interference matrix $\mathbf{I}_{/ A }^\ast$. Thus, let us remove the $\frac{D}{K}$ zero rows and still denote it as $\mathbf{I}_{/ A }^\ast$ for ease of notation. First prove the upper bound of singular values of $\mathbf{I}_{/A}^\ast$. Denote $\mathbf{I}_{/ A }^\ast$ with no quantization error as $\bar{\mathbf{I}}_{/ A }^\ast$ and define $\mathbf{E}=\bar{\mathbf{I}}_{/ A }^\ast-\mathbf{I}_{/ A }^\ast $. Let $\mathbf{E}_i$ denote rows of $\mathbf{E}$ for $ i=1, \ldots, T-\frac{D}{K}$.

From the perturbation bounds for the singular values of a matrix due to Weyl \cite{Weyl}, we have
\begin{eqnarray}
|\sigma_i- \bar{\sigma}_i | \le \| \mathbf{E} \|_F, \text{ for } i=T-D+1, \ldots, T- \frac{D}{K}
\end{eqnarray}
where $\bar{\sigma}_i$ is the singular value of $\bar{\mathbf{I}}_{/ A }^\ast$. Since $\bar{\mathbf{I}}_{/ A }^\ast$ is of rank $T- \frac{D}{K}$, we have that $\bar{\sigma}_i =0$ for $i=T-D+1, \ldots, T- \frac{D}{K}$.

Moreover, $\mathbb{E} \left[ \| \mathbf{E}_i \|^2_2 \right] \le 4 \mathbb{E} \left[ \sin^2 \frac{\theta}{2} \right]$, for $i=1, \ldots, T- \frac{D}{K}$. Therefore, $\mathbb{E} \|E\|^2_F \le 4 ( T- \frac{D}{K}) \mathbb{E} \left[ \sin^2 \frac{\theta}{2} \right]$. It follows then
\begin{eqnarray}
\mathbb{E} \left[ \sigma^2_i \right] & \le& 4(T- \frac{D}{K}) \mathbb{E} \left[ \sin^2 \frac{\theta}{2} \right] \nonumber \\
&=& O\left( \mathbb{E} \left[\sin^2 \theta \right] \right), \text{ for } i=1, \ldots, T- \frac{D}{K}.
\end{eqnarray}


Next prove the lower bound, i.e., $ \mathbb{E} \left[ \sigma^2_{T- \frac{D}{K}} \right]= \Omega \left( \mathbb{E} \left[\sin^2 \theta \right] \right)$. Let $\mathbf{I}_i$ denote rows of $\mathbf{I}_{/A}^\ast$, for $i=1,\ldots, T- \frac{D}{K}$ and $\mathbf{S}_i$ denote the space spanned by all the rows $\mathbf{I}_1, \ldots, \mathbf{I}_{T- \frac{D}{K}}$ other than $\mathbf{I}_i$. Define
\begin{align}
\text{dist} (\mathbf{I}_i, \mathbf{S}_i) := \min_{\mathbf{X}_i \in \mathbf{S}_i}   \| \mathbf{I}_i-\mathbf{X}_i\|_2
\end{align}
Due to the quantization error, $ \text{dist} (\mathbf{I}_i, \mathbf{S}_i)  \ge 2 \sin \frac{\theta}{2}$ almost surely when $\theta$ is sufficiently small. Furthermore, from the {\it negative second moment identity} in \cite{Tao08} we have
\begin{align}
\sum_{i=1}^{T- \frac{D}{K}} \sigma^{-2}_i = \sum_{i=1}^{T- \frac{D}{K}} \text{dist} (\mathbf{I}_i, \mathbf{S}_i)^{-2}.
\end{align}
Then it follows that
\begin{align}
\sigma_{T- \frac{D}{K}}^{-2} & \le  \sum_{i=1}^{T- \frac{D}{K}} \sigma^{-2}_i
  =  \sum_{i=1}^{T- \frac{D}{K}} \text{dist} (\mathbf{I}_i, \mathbf{S}_i)^{-2} \nonumber \\
 & \le  \sum_{i=1}^{T- \frac{D}{K}} \frac{1}{4 \sin^2 \theta}
  =  \frac{T- \frac{D}{K}}{4 \sin^2 \theta}.
\end{align}
Therefore, we have
\begin{align}
\sigma^2_{T- \frac{D}{K}} &\ge \frac{4 \sin^2 \theta}{T- \frac{D}{K}}
= \Omega \left( \sin^2 \theta \right).
\end{align}
We conclude that $ \mathbb{E}[\sigma_i^2] =  \Theta \left( \mathbb{E}[\sin^2 \theta] \right), \text{ for } i=T-D+1, \ldots, T- \frac{D}{K}$.
\end{proof}
}

\bibliography{OutdatedCSIT}

\newpage

\begin{table}
\centering
\caption{Approx. range of optimality of the MAT scheme with LTE-like parameters, see Sec. \ref{sec:LTE}}
\label{tapcap}
\begin{tabular}{l|l|l|l}
\hline
Number of antennas $K$   &   Coherence time  $N$    &  Coherence time $T_c$ (msec)  &  Velocity $v$ (km/hr) \\
\hline
2 &    $ 2 \le N \le 5000 $  &   $ T_c \le 30 $ &  $ v \ge 17  $  \\
\hline
4 &   $ 7 \le N \le 3200 $ &  $ 0.04 \le T_c \le 20  $ & $27 \le v \le 12,000$ \\
\hline
16 &  $ 46 \le N \le 2400 $ & $0.3 \le T_c \le 14 $ &  $36 \le v \le 1900 $\\
\hline
\end{tabular}
\end{table}

\begin{figure}
\centering
\includegraphics[width=7.0in]{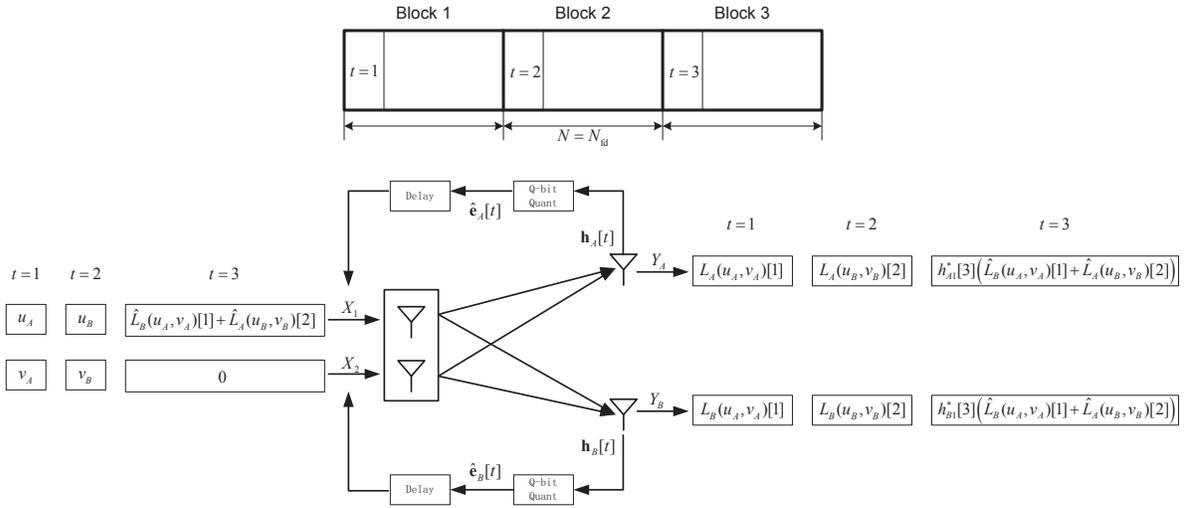}
\centering
\caption{Block diagram of the MAT scheme with quantization error for $K=2$.}
\label{FigSysModel}
\end{figure}

\newpage

\begin{figure}
\centering
\includegraphics[width=5.0in]{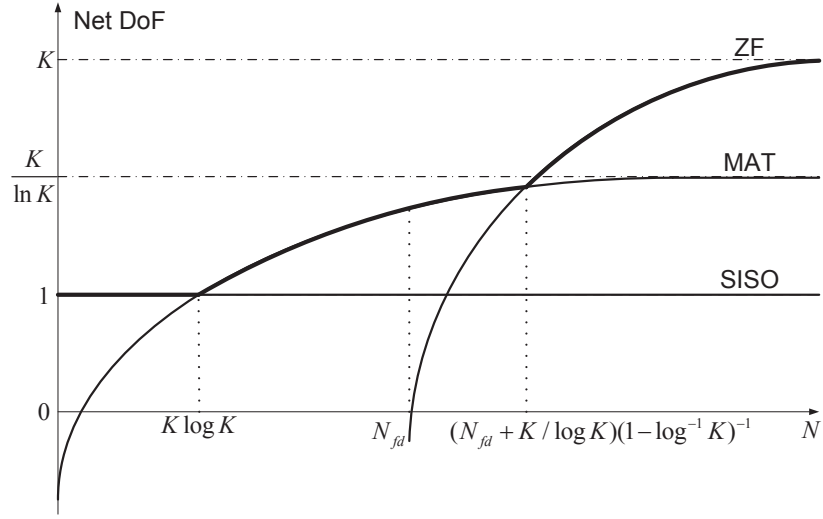}
\centering
\caption{The net DoF for MAT, SISO and multiuser ZF precoding versus the coherence block length $N$ with large $K$.  Note the is figure is not to scale (the range of optimality for MAT would be much wider).}
\label{FigDoF}
\end{figure}

\newpage

\begin{figure}
\centering
\includegraphics[width=5.0in]{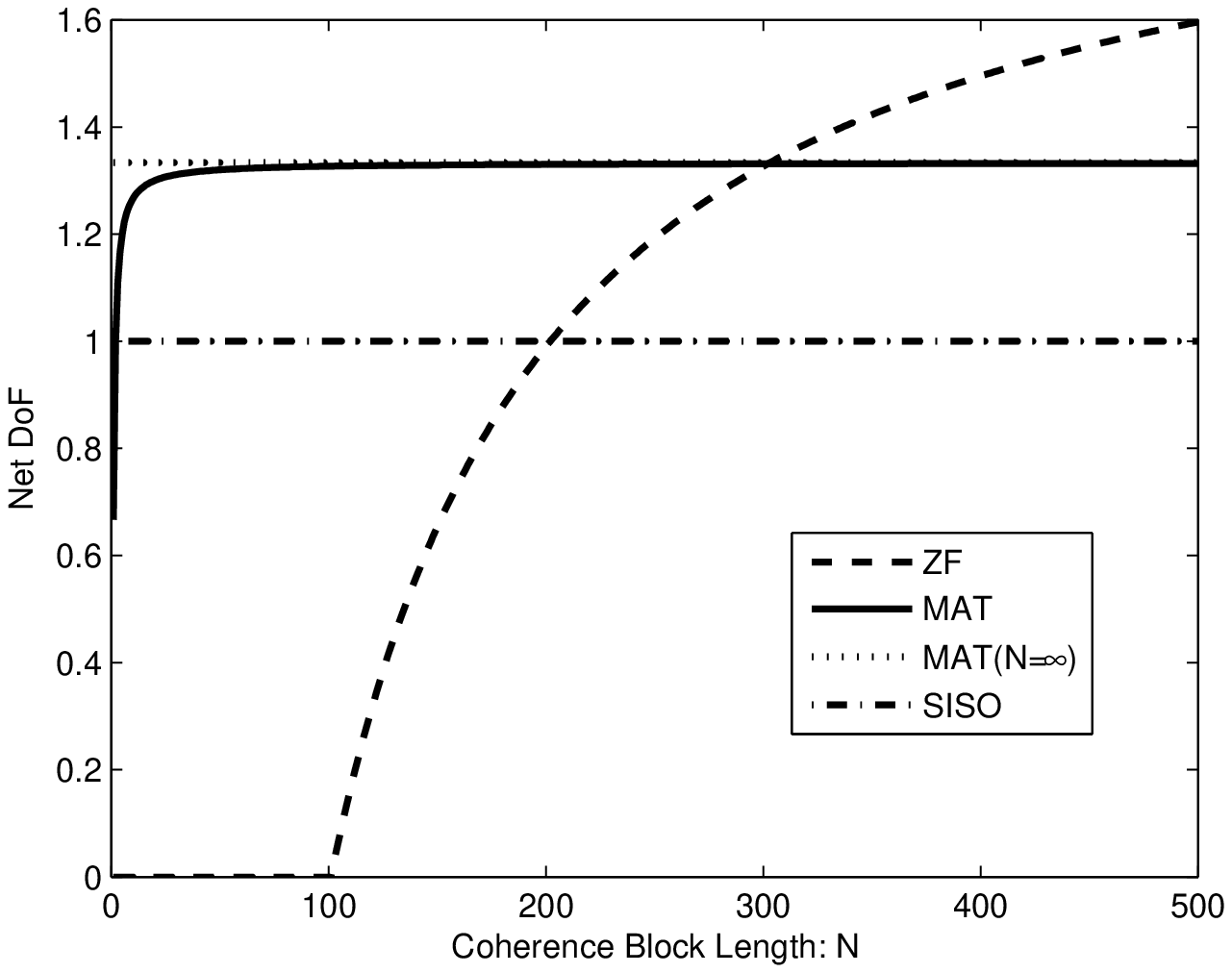}
\centering
\caption{The net DoF of MAT, SISO and ZF for $K = 2$ with varying coherence block length $N$.}
\label{FigNetDoF2}
\end{figure}

\begin{figure}
\centering
\includegraphics[width=5.0in]{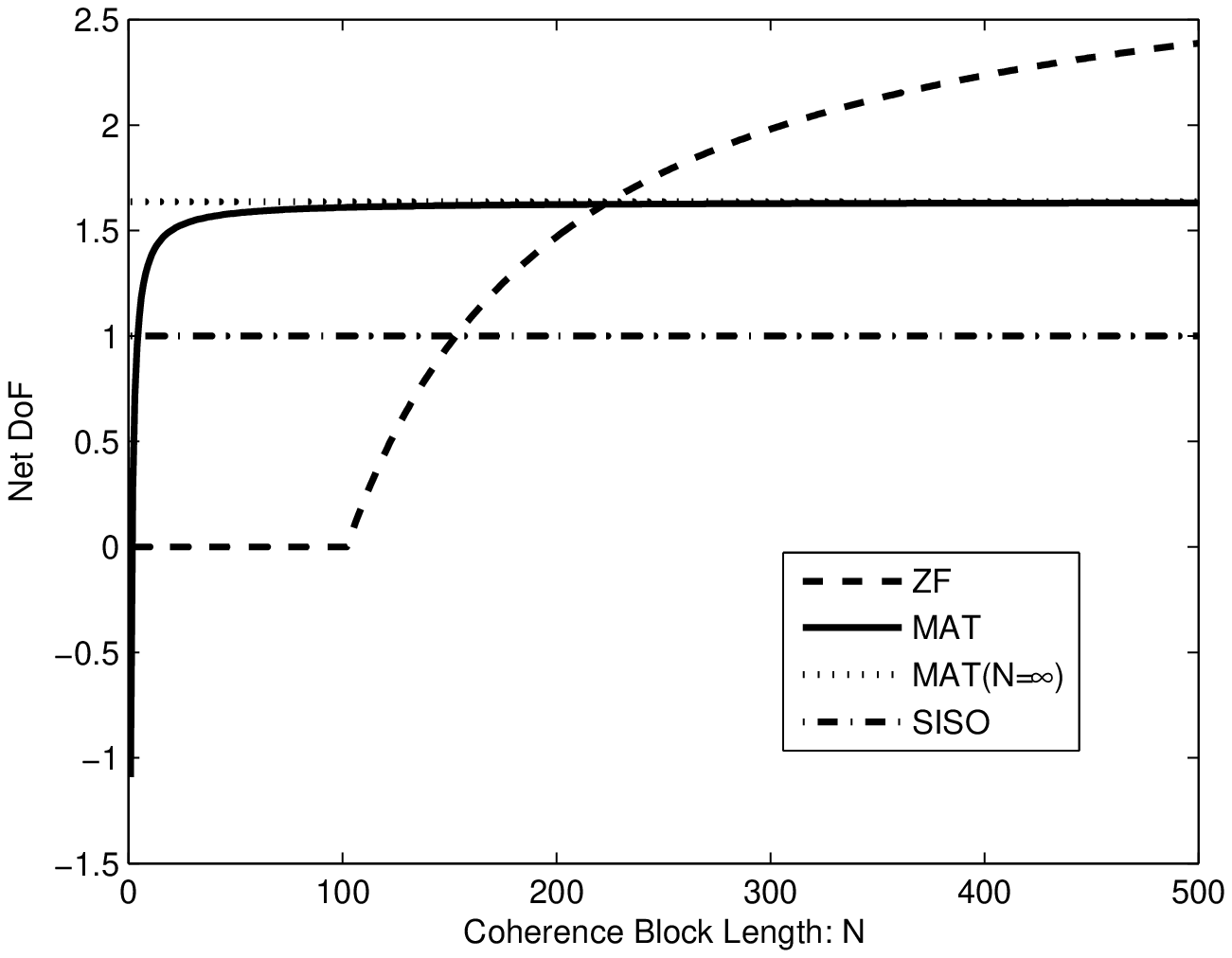}
\centering
\caption{The net DoF of the MAT, SISO and ZF for $K=3$ with varying coherence block length $N$.}
\label{FigNetDoF3}
\end{figure}

\end{document}